\documentclass[12pt]{article}

\usepackage{geometry}  
\usepackage{graphicx}
\usepackage{amsthm}
\usepackage{amsmath}
\usepackage{amssymb}
\usepackage{hyperref}
\usepackage{enumerate}
\usepackage{enumitem}

\newtheorem{theorem}{Theorem}

\newtheorem{prop}{Proposition}

\usepackage{xcolor}

\newcommand{\sgn}{\operatorname{sgn}}

\title{Rotationally Symmetric Zoll Metrics\\with a Cubic Integral}

\author{
Holger R. Dullin\thanks{
School of Mathematics and Statistics, University of Sydney, Camperdown NSW 2050, Australia.\\
Emails:
\texttt{holger.dullin@sydney.edu.au},
\texttt{gleb.palshin@sydney.edu.au}
}
\quad
Vladimir S. Matveev\thanks{
Institut f\"{u}r Mathematik, Friedrich Schiller Universit\"{a}t Jena, 07737 Jena, Germany.\\
Emails:
\texttt{vladimir.matveev@uni-jena.de},
\texttt{serena.scapucci@uni-jena.de}
}
\\[0.7em]
Gleb P. Palshin\footnotemark[1]
\quad
Serena Scapucci\footnotemark[2]
}

\date{}

\begin{document}

\maketitle
\vspace{-1.5em}

\begin{abstract}
    We introduce a new method for constructing Zoll metrics that admit first integrals polynomial in the momenta. The method uses Funk's form of the metric together with an algebraic relation between the first integrals. This reduces the problem to an algebraic system from which all unknown functions can be determined explicitly. In the case of a non-trivial cubic integral, we present an explicit two-parameter family of metrics that covers all such rotationally symmetric Zoll metrics on the $2$-sphere: every such metric is isometric to a member of this family. We compare these metrics with the earlier results of Valent, Duval, and Shevchishin.
\end{abstract}

\section{Introduction}\label{sec1}

All geodesics of a two-dimensional closed Riemannian manifold $(\mathcal{M}^2,g)$ can be closed only if $\mathcal{M}^2$ is homeomorphic to the real projective plane $\mathbb{R}P^2$ or to the sphere $\mathbb{S}^2$, see~\cite[Theorem~7.37]{Besse1978}. The former case implies that the curvature of the metric $g$ is constant, see~\cite[Theorem~2.6]{Pries2009}. In the latter case, the first non-trivial examples were found by Zoll \cite{Zoll1903}. Since then, any Riemannian manifold $(\mathcal{M}^2,g)$ such that \mbox{$\mathcal{M}^2 \cong \mathbb{S}^2$} and all of whose geodesics are closed is called a {\it Zoll surface}, and the corresponding metric $g$ is called a {\it Zoll metric}. In particular, all geodesics of a Zoll metric necessarily have the same length, see~\cite{GromollGrove1981}. More detailed information on manifolds all of whose geodesics are closed can be found in Besse's monograph~\cite{Besse1978}. Classical results about such manifolds are due to Bott~\cite{Bott1954} and Weinstein~\cite{Weinstein1974}.  Guillemin~\cite{Guillemin1976} classified infinitessimal Zoll deformations of the round metric by an odd function on the sphere. Projective Zoll structures are analysed in
\cite{LeBrunMason2002}, \cite{LeBrunMason2010}, 
and \cite{MazzucchelliSuhr2019} gives an intrinsic characterisation of Zoll metrics.

For a surface of revolution Zoll metrics can be written down explicitly in a form due to Funk~\cite{Funk1913}. 
In Funk's metric there is functional freedom of choice of an odd function $h(x)$ with certain boundary conditions, see below. The resulting metric has a linear integral corresponding to the rotational symmetry of the surface of revolution. Whether there are additional higher degree polynomial integrals depends on the choice of $h$. 
Any admissible choice of $h$ does give a Zoll metric, however, in general these do not have a \emph{polynomial} third integral, so are not polynomially superintegrable. 
General results about metrics with polynomial integrals are due to Kolokoltsov \cite{Kolokoltsov1983}. Kiyohara \cite{Kiyohara2001} showed that metrics with high degree polynomial integrals exist, but recently Matveev \cite{Matveev2025} showed that these examples while they are Zoll metrics they are not polynomially superintegrable.

It is known that Funk metrics cannot have an additional quadratic integral, see~\cite{Kiyohara1991}. 
Recently polynomially superintegrable Zoll metrics have become an active area of research.
The case of the cubic integral has been described by Matveev and Shevchishin
\cite{MatveevShevchishin2011}.
Based on this work explicit forms for such Zoll metrics were given by Valent for the cubic case
\cite{Valent2014}, \cite{ValentDuvalShevchishin2015}.
 We discuss these result in detail in section~\ref{sec3} of this paper. The existence of such metrics with a quartic integral was discovered by Novichkov \cite{Novichkov2015}. It is also known that for any odd degree of the additional integral, there exists a multi-parameter family of Zoll metrics obtained by Valent in~\cite{Valent2021}.

The general classification of polynomially superintegrable Zoll metrics remains incomplete. Since the typical approach to the problem involves solving nonlinear ODEs, the complexity increases significantly with the degree of the integral. 
In this paper, we show how this can be avoided for surfaces of revolution by reducing the problem to a system of algebraic equations. 
We demonstrate the effectiveness of this method for the cubic case and are able to give the complete list of all Funk metrics with a cubic inegral.
We believe that this approach can be generalised to the case of Funk's metrics with an integral of arbitrary degree. We also note that the result obtained here is consistent with the more general idea of finding superintegrable geodesic flows with polynomial integrals in an algebraic way, recently expressed in~\cite{Matveev2025}.

\subsection{Funk Metrics}\label{sec11}

In~\cite{Funk1913}, Funk found the following form of a rotationally symmetric Zoll metric
$$
    g = (1+h(\sin r))^2 dr^2 + \cos^2r\,dy^2,
$$
where 
$\displaystyle{r \in \left[-\frac{\pi}{2}, \frac{\pi}{2}\right]}$, $y \in \mathbb{S}^1$,
and the following conditions are satisfied:
\begin{enumerate}[label=(F\arabic*)]
    \item the function $r \mapsto h(\sin r)$ is real-valued for all $\displaystyle{r \in \left[-\frac{\pi}{2},\frac{\pi}{2}\right]}$, that is, $h \colon [-1, 1] \to \mathbb{R}$;\label{condition:one}
    \item the function $h$ is smooth, that is~$h \in C^\infty[-1,1]$;\label{condition:two}
    \item the function $h$ is odd, that is~$h(-x) = -h(x)$ for all $x \in [-1, 1]$;\label{condition:three}
    \item at the endpoints, one has $h(1) = h(-1) = 0$;\label{condition:four}
    \item $h(x) + 1 > 0$ for all $x \in [-1, 1]$.\label{condition:five}
\end{enumerate}
Moreover, by \cite[Theorem~4.13 and Corollary~4.16]{Besse1978}, every rotationally symmetric Zoll metric on $\mathbb{S}^2$, up to isometry and normalisation, can be written in this form.
After the coordinate change $\sin r \to x$, the Funk metric is
\begin{equation}\label{eq1}
    g = \frac{(h(x)+1)^2}{1-x^2} dx^2 + (1-x^2) dy^2,
\end{equation}
where $h(x)$ satisfies conditions~\ref{condition:one}--\ref{condition:five}.

\subsection{First Integrals Polynomial in Momenta}\label{sec12}

As a mechanical system, the geodesic flow of the metric~\eqref{eq1} has two obvious constants of motion, namely, the Hamiltonian
\begin{equation}\label{eq2}
    H = \frac{1}{2}\left(
        \frac{1-x^2}{\left(h(x)+1\right)^2}p_x^2 +
        \frac{1}{1-x^2} p_y^2
    \right)
\end{equation}
quadratic in the momenta $(p_x,p_y)$, representing the kinetic energy of the system, and hence the speed of the geodesic, and the linear first integral associated with the rotational symmetry
\begin{equation}\label{eq3}
    L = p_y.
\end{equation}
The closedness of all geodesics implies maximal superintegrability of the geodesic flow
$$
    \frac{d\boldsymbol{q}}{dt}=\frac{\partial H}{\partial \boldsymbol{p}},
    \quad
    \frac{d\boldsymbol{p}}{dt}=-\frac{\partial H}{\partial \boldsymbol{q}},
    \qquad
    \boldsymbol{q}=(x,y),
    \quad
    \boldsymbol{p}=(p_x,p_y),
$$
and hence the presence of an additional non-trivial integral $F$, functionally independent of $H$ and $L$ (not necessarily polynomial in the momenta).

In this paper, we study the special case in which there exists a first integral $F$ cubic in the momenta $(p_x,p_y)$, that is
\begin{equation}\label{eq4}
    F := \sum_{k=0}^{3} A_k(x,y) p_x^{3-k} p_y^k.
\end{equation}
The problem then is to find all possible functions $h(x)$ and $A_k(x,y)$ such that \mbox{$\{ H, F \} \equiv 0$}, where $\{ \cdot,\cdot \}$ is the standard Poisson bracket defined as
\begin{equation}\label{eq5}
    \{ f_1, f_2 \} := \left( \frac{\partial f_1}{\partial x} \frac{\partial f_2}{\partial p_x} - \frac{\partial f_1}{\partial p_x} \frac{\partial f_2}{\partial x} \right) +
    \left( \frac{\partial f_1}{\partial y} \frac{\partial f_2}{\partial p_y} - \frac{\partial f_1}{\partial p_y} \frac{\partial f_2}{\partial y} \right)
\end{equation}
for arbitrary smooth functions $f_1$ and $f_2$ on $T^*\mathcal{M}^2$.

\subsection{Main Result}\label{sec13}

We prove the following theorem that explicitly presents all rotationally symmetric Zoll metrics with a cubic integral.

\begin{theorem}\label{t1}
Let $g$ be a rotationally symmetric smooth Zoll metric in Funk's form~\eqref{eq1} whose geodesic flow admits a first integral that is non-trivially cubic in the momenta. Then the function $h(x)$ is given by
\begin{equation}\label{eq6}
    h(x) = 
    x \sqrt{\frac{Q_1(x) - 2\sqrt{Q_2(x)}}{Q_2(x)}},
\end{equation}
where
\begin{align*}
    Q_1(x) &= -(2\sigma_1+1)x^2 + 2\sigma_1 + 3,
    \\
    Q_2(x) &= (2\sigma_1 + 4\sigma_2 + 1)x^4 - (6\sigma_1 + 8\sigma_2 +3)x^2 + 4\sigma_1 + 4\sigma_2 + 3,
\end{align*}
for some real parameters $\sigma_1$ and $\sigma_2$ satisfying
\begin{equation}\label{eq7}
    \sigma_1 > -\frac{3}{2}
    \quad
    \text{and}
    \quad
    -\left(\sigma_1+\frac{3}{4}\right) 
    < \sigma_2 <
    \frac{(2\sigma_1+1)(2\sigma_1-3)}{16}.
\end{equation}
Conversely, for every pair $(\sigma_1,\sigma_2)$ satisfying~\eqref{eq7}, the metric~\eqref{eq1}, with $h(x)$ given by~\eqref{eq6}, is a rotationally symmetric Zoll metric whose geodesic flow admits a non-trivial cubic integral.
\end{theorem}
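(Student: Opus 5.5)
We use the rotational symmetry together with an algebraic relation between the integrals to turn $\{H,F\}=0$ into an algebraic system for $h$. Since $L=p_y$ Poisson-commutes with $H$, the space of cubic integrals is invariant under $\{L,\cdot\}=-\partial_y$. Decomposing $F$ into Fourier modes in $y$ and discarding polynomials in $H,L$, we may assume that a non-trivial cubic integral has $y$-dependence $e^{\pm i y}$; this matches the structure in \cite{MatveevShevchishin2011}. Write $F=\operatorname{Re}\big(e^{iy}\sum_k a_k(x)p_x^{3-k}p_y^k\big)$ with complex $a_k$, together with its partner $G=\{L,F\}$. Expanding $\{H,F\}=0$ in powers of $p_x,p_y$ gives a linear ODE system for the $a_k$ with coefficients built from $h$.

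The second ingredient is algebraic. The integrals $H,L,F,G$ are functionally dependent, and $F^2+G^2$ is invariant under rotation. It is therefore a polynomial of degree $6$ in $H$ and $L$:
$$F^2+G^2=P(H,L).$$
On the torus $\{H=\tfrac12, L=\ell\}$, the Zoll property in Funk's form says that the period in $x$ of the reduced flow is independent of $\ell$. Substitute the ansatz into $F^2+G^2=P(H,L)$ and use the explicit Funk form \eqref{eq1}. We expect this to eliminate the ODEs: comparing coefficients of the monomials $p_x^ap_y^b$ gives polynomial relations between $|a_k|^2$, the unknown function $\phi(x)=(h+1)^2$ and $x$. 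The aim is to solve this system for $\phi$ in terms of the free coefficients of $P$. After normalising $F$ and removing trivial additions $cHL+c'L^3$, two coefficients of $P$ should survive, namely $\sigma_1$ and $\sigma_2$. The oddness condition \ref{condition:three} should force $\phi$ to satisfy a quadratic equation in $h^2$ whose coefficients are polynomials in $x^2$. Solving it gives
$$h^2=x^2\,\frac{Q_1-2\sqrt{Q_2}}{Q_2}.$$
The root with $-2\sqrt{Q_2}$ is the one that vanishes appropriately at $x=\pm1$. Check: $Q_1(1)=2$ and $Q_2(1)=1$, so $h(\pm1)^2=0$, which is \ref{condition:four}.

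It remains to translate \ref{condition:one}--\ref{condition:five} into the inequalities \eqref{eq7}. Write $u=x^2\in[0,1]$. We need:
\begin{itemize}
\item $Q_2>0$ on $[0,1]$;
\item $Q_1-2\sqrt{Q_2}\ge0$, with smoothness of the square root;
\item $h>-1$, where the sign of $h$ is taken equal to the sign of $x$ so that $h$ is odd and smooth.
\end{itemize}
Smoothness at $x=\pm1$ requires $Q_1-2\sqrt{Q_2}$ to vanish to second order at $u=1$, so that $h$ has a simple zero there. Positivity at $u=0$ should give $Q_1(0)^2>4Q_2(0)$, that is $(2\sigma_1+3)^2>4(4\sigma_1+4\sigma_2+3)$. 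This rearranges to $\sigma_2<(2\sigma_1+1)(2\sigma_1-3)/16$, the upper bound in \eqref{eq7}. The condition $Q_2(0)>0$ gives $\sigma_2>-(\sigma_1+\tfrac34)$, and $Q_1(0)>0$ gives $\sigma_1>-\tfrac32$. We then show these endpoint conditions suffice on the whole interval, using that $Q_1^2-4Q_2$ factors as $(1-u)^2$ times a linear function of $u$. Finally, $h+1>0$ follows from $h^2<1$, which we verify as a polynomial inequality.

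For the converse, we construct the $a_k$ explicitly from the solution and check $\{H,F\}=0$ directly. Checking that $F$ is non-trivial then reduces to the condition $\sigma$-generic $\neq$ round, which holds because $h\not\equiv0$ on the open parameter region.

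\textbf{Main obstacle.} The hardest step is the second one: proving that the algebraic relation, together with Funk's form, really determines $h$ without solving a nonlinear ODE, and controlling the branch choices. I would first attempt it by evaluating the identity on $p_x=0$ and on the leading order in $p_x$, to extract $\phi$ directly.
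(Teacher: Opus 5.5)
Your translation of (F1)--(F5) into the inequalities on $\sigma_1,\sigma_2$ is sound and is essentially the paper's Propositions 5--7. (In fact $Q_1^2-4Q_2=\mathcal{D}(1-u)^2$ with $\mathcal{D}$ a constant.) Using $2\pi$-periodicity in $y$ to dispose of the zero-eigenvalue case is also a legitimate shortcut compared with Proposition 9.

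\textbf{The central step fails as proposed.} You flag it as the main obstacle, but the tactic you suggest cannot work. The identity $F_+F_-=P(H,L)$, or your $F^2+G^2=P(H,L)$, yields only four relations: expressions for $a_0^2$, $a_1^2-2a_0a_2$, $a_2^2-2a_1a_3$ and $a_3^2$ in terms of $h$, $x$ and $c_0,\dots,c_3$. For any prescribed $h$ these are pointwise solvable over $\mathbb{C}$: fix $a_0$ and $a_3$, and a quartic remains for $a_1$. So they impose no condition on $h$ at all. Your Hermitian version is worse still, since it is invariant under multiplying $\sum_k a_kp_x^{3-k}p_y^k$ by a pointwise phase $e^{i\theta(x)}$. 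Concretely, the $p_x=0$ coefficient gives $|a_3|^2$ as a function of $x$ alone, and the top $p_x$ coefficient merely expresses $|a_0|^2$ through $(h+1)^{-6}$.

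\textbf{Missing idea.} One equation coming from $\{H,F_+\}=0$ is already algebraic. Its $p_y^4$ coefficient is $e^{\alpha y}\bigl(\frac{x}{(1-x^2)^2}a_2-\frac{\alpha}{1-x^2}a_3\bigr)$, which gives $a_3=\frac{x}{\alpha(1-x^2)}a_2$. Only with this fifth equation can you solve for $a_0,\dots,a_3$. Substituting back into $a_0^2=c_0(1-x^2)^3/(8(h+1)^6)$ then gives a single polynomial curve $\mathcal{P}_1(h,x)=0$.

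\textbf{The passage from the curve to the formula for $h$ is asserted, not derived.}
\begin{itemize}
\item Restricting to $e^{\pm iy}$ needs proof, since periodicity only gives $\alpha\in i\mathbb{Z}$. The paper first needs $c_0\neq 0$ (Proposition 3). Then $h(\pm1)=0$ forces $(\alpha^2+1)^3(\alpha^2+9)=0$, and $\alpha=\pm 3i$ is excluded because it yields only $h\equiv 0$.
\item The reduction to two parameters does not come from removing $cHL+c'L^3$, which would destroy the eigenvector property. It comes from $h(0)=0$, which forces $c_3=-(c_0+2c_1+4c_2)/8$, together with the normalisation $c_0=1$. You also need $c_1/c_0,\,c_2/c_0\in\mathbb{R}$ (Proposition 4).
\item The quadratic in $h^2$ does not come from oddness. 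For $\alpha=\pm i$ the curve factors as $\mathcal{P}_4\mathcal{P}_5$. The factor $\mathcal{P}_5$ must be discarded because, with $h(0)=h(\pm1)=0$, it yields only $h\equiv0$. The remaining factor $\mathcal{P}_4$ is the quadratic in $h^2$, and oddness only fixes the sign $\delta_0\sgn x$.
\item Your converse omits regularity of the integrals at the poles (Proposition 8).
\item Your non-triviality argument is insufficient. The paper uses Kiyohara's theorem, that non-constant curvature excludes additional quadratic integrals, together with the $y$-dependence of $\mathcal{F}_1$ and $\mathcal{F}_2$.
\end{itemize}
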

\begin{proof}
    See Section~\ref{sec2} of this paper.
\end{proof}

Here, by non-triviality, we mean the functional independence of the cubic integral from any integrals of lower degree in momenta. The set of admissible values of the parameters $\sigma_1$ and $\sigma_2$ from Theorem~\ref{t1} is shown in Figure~\ref{fig1}.

\begin{figure}
    \centering
    \includegraphics[width=0.4\linewidth]{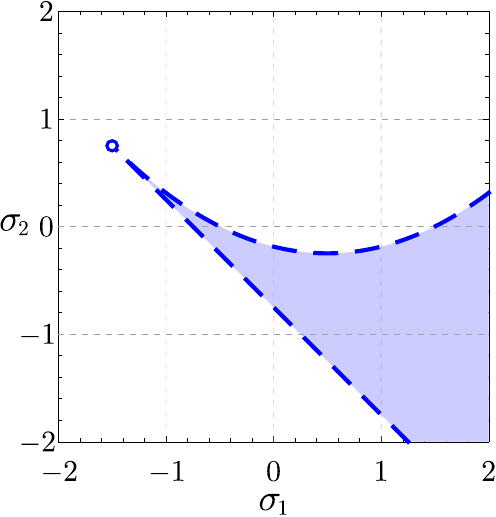}
    \caption{Admissible parameter values (shaded region) for rotationally symmetric Zoll metrics admitting a non-trivial cubic integral}
    \label{fig1}
\end{figure}

\section{Proof of Theorem~\ref{t1}}\label{sec2}

We start with the approach that was successfully applied in~\cite{MatveevShevchishin2011}. Let us define a linear operator
\begin{equation}\label{eq8}
    \mathcal{L} \colon F \mapsto \{ F, L \} 
\end{equation}
acting on the space of complex integrals of degree $3$ in momenta. The first integrals of degree $3$, generated by~\eqref{eq2} and~\eqref{eq3}, namely,
$$
    H L
    \quad
    \text{and}
    \quad
    L^3,
$$
are eigenvectors of~\eqref{eq8} with eigenvalue $0$. Since the space of cubic integrals is finite-dimensional (see~\cite{Kruglikov2008}), there are two possibilities: either there exists a non-zero eigenvalue $\alpha \in \mathbb{C}$ or the spectrum of $\mathcal{L}$ consists only of zero.

\subsection{The Case of a Non-Zero Eigenvalue}\label{sec21}

By solving $\mathcal{L}(F)=\alpha\,F$ for complex $\alpha \neq 0$ and $F$ of the form~\eqref{eq4}, we obtain
\begin{equation}\label{eq9}
    F_+ := e^{\alpha y}\sum_{k=0}^3 a_k(x) p_x^{3-k} p_y^k
\end{equation}
with unknown functions $a_k(x)$. In addition, the operator $\mathcal{L}$ has another non-zero eigenvalue $(-\alpha)$ corresponding to the eigenfunction
\begin{equation}\label{eq10}
    F_- := e^{-\alpha y}\sum_{k=0}^3 a_k(x) p_x^{3-k} (-p_y)^k.
\end{equation}
Moreover, if~\eqref{eq9} is a first integral, then~\eqref{eq10} is also a first integral, since the systems of equations produced by the monomials in the vanishing brackets $\{ H, F_+ \}$ and $\{ H, F_- \}$ have the same set of solutions. The existence of the second cubic integral is due to the presence of symmetry with respect to the sign change $y \to -y$.

Let us construct a degree~$6$ integral
\begin{equation}\label{eq11}
    P := F_+ F_-.
\end{equation}
Note that $P$ is independent of $y$, and hence $\{ L, P \}=0$. Thus, the triple of first integrals $(H, L, P)$ pairwise commutes with respect to the Poisson bracket~\eqref{eq5}.
Since $\mathrm{dim}(\mathcal{M})=2$ is the maximum number of independent integrals in involution, the triple is functionally dependent.
Let us also define a degree $5$ integral
$$
    K := \{ F_+, F_- \}.
$$
Since
$$
    \{ L, K \} = \alpha \{ F_+, F_- \} - \alpha \{ F_+, F_- \} = 0,
$$
the triple $(H,L,K)$ is pairwise commuting and $K = K(H,L)$. Hence, the integrals $(H, L, F_+, F_-)$ form a Poisson algebra given by the matrix
$$
    \Pi =
    \begin{bmatrix}
    0 & 0 & 0 & 0
    \\
    0 & 0 & -\alpha F_+ & \alpha F_-
    \\
    0 & \alpha F_+ & 0 & K(H,L) 
    \\
    0 & -\alpha F_- & -K(H,L) & 0
    \end{bmatrix}.
$$
We know that the integral $F_+ F_- = P(H,L)$ is also a function of $(H, L)$, so $(H,L,F_+,F_-)$ are bounded by the following constraint:
$$
    G \equiv 0,
    \qquad
    G := F_+ F_- - P(H,L).
$$
Since $G$ is constantly zero as a function on the phase space, it commutes with any other function defined on it, including $H$, $L$, $F_+$, and $F_-$. Hence, $G$ is a Casimir of $\Pi$, i.\,e.~$\nabla G \in \ker \Pi$, where $\nabla$ is the gradient with respect to $(H,L,F_+,F_-)$. For non-zero (non-trivial) integrals $F_+$ and $F_-$ one obtains
$$
    \Pi \cdot \nabla G = \overline{0} 
    \quad
    \iff 
    \quad
    K - \alpha \frac{\partial P}{\partial L} = 0.
$$
Therefore,
$$
    K(H,L) = \alpha \frac{\partial P(H,L)}{\partial L}.
$$
The Poisson structure then is
$$
    \Pi = 
    \alpha
    \begin{bmatrix}
    0 & 0 & 0 & 0
    \\
    0 & 0 & - F_+ & F_-
    \\
    0 & F_+ & 0 & \partial_L P 
    \\
    0 & - F_- & -\partial_L P & 0
    \end{bmatrix},
$$
and $F_+ F_- - P(H,L)$ is the Casimir of this Poisson structure. The following statement provides an explicit formula for this Casimir.

\begin{prop}\label{prop1}
    The first integral $P$ has the form
    \begin{equation}\label{eq12}
        P = c_0 H^3 + c_1 H^2 L^2 + c_2 H L^4 + c_3 L^6
    \end{equation}
    for some constants $c_0, \dots, c_3$.
\end{prop}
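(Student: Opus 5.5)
We need to show that $P = F_+F_-$, which is a homogeneous polynomial of degree $6$ in the momenta and a function of $(H,L)$ alone, is in fact a weighted-homogeneous polynomial in $H$ and $L$ with constant coefficients. The plan is to first pin down $P$ as a function of two variables, and then use homogeneity in the momenta.

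Since $P$ does not depend on $y$, it is a function on the $(x,p_x,p_y)$-space. The functional dependence of $(H,L,P)$ is established only abstractly above, so I would make it concrete. On the open set $x\in(-1,1)$, $p_x\neq 0$, fix $x=x_0$. Then the map $(p_x,p_y)\mapsto(H,L)$ is a local diffeomorphism, because its Jacobian is $\partial H/\partial p_x = (1-x_0^2)p_x/(h(x_0)+1)^2 \neq 0$. Hence locally $P=\Phi(H,L)$ with $\Phi$ smooth.

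Next I would show that $\Phi$ does not depend on the chosen slice. Suppose we move along the flow of $H$, or equivalently vary $x$ on a fixed level set of $(H,L)$. Both $P$ and $\Phi(H,L)$ are invariant under the flows of $H$ and $L$. The joint level sets of $(H,L)$ are generically the Liouville tori, which are connected, and away from the equator-type degeneracies the flow reaches every value of $x$ in the allowed range. So a single function $\Phi$ represents $P$ on a dense open set, and therefore everywhere by continuity.

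Homogeneity is then the key point. Under the fibrewise scaling $p\mapsto\lambda p$, $P$ scales by $\lambda^6$, $H$ by $\lambda^2$ and $L$ by $\lambda$. Hence $\Phi(\lambda^2 H,\lambda L)=\lambda^6\Phi(H,L)$ for all $\lambda>0$, where $H>0$ and $|L|\le\sqrt{2H}$ up to the factor $(1-x^2)$ bound. Setting $\lambda=H^{-1/2}$ gives
\[
\Phi(H,L)=H^3\,\psi\!\left(L/\sqrt{H}\right),
\qquad \psi(s):=\Phi(1,s).
\]

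The main obstacle is showing that $\psi$ is an even polynomial of degree at most $6$. For this I would use the fact that $P$ is polynomial in $(p_x,p_y)$. Restrict to a fixed $x_0$ and write $H=a p_x^2+b p_y^2$ with $a,b>0$. Then $P(p_x,p_y)$ is a polynomial, while $H^3\psi(p_y/\sqrt{H})$ must agree with it. Setting $p_y=0$ shows that $P(p_x,0)=c_0H^3$, which is a constant times $p_x^6$.

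More systematically, $P$ is a polynomial in $p_x$ and $p_y$, and it depends on $p_x$ only through $p_x^2=(H-bL^2)/a$. Evenness in $p_x$ holds because $\Phi(H,L)$ is invariant under $p_x\to-p_x$. Writing $P=\sum_j q_j(p_y)\,(p_x^2)^j$ and substituting $p_x^2=(H-bL^2)/a$ with $p_y=L$ expresses $P$ as a polynomial in $(H,L)$, weighted-homogeneous of degree $6$. The only monomials of that weight are $H^3$, $H^2L^2$, $HL^4$ and $L^6$; odd powers of $L$ would require half-integer powers of $H$. This identifies $\Phi$ with such a polynomial at each $x_0$, with coefficients that a priori depend on $x_0$.

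Since $\Phi$ is independent of $x_0$ by the previous step, the coefficients $c_0,\dots,c_3$ are constants. This gives \eqref{eq12}.
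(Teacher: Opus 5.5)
Your argument is correct. Its algebraic core is exactly the first half of the paper's proof: your Step~5, which at fixed $x$ eliminates $p_x^2=(H-bL^2)/a$ to get $P=\sum_k c_k(x)H^kL^{6-2k}$, is what the paper does with $t=p_x/p_y$.

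You differ in how evenness in $p_x$ and constancy of the $c_k$ are obtained.

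\textbf{The paper's route.} Evenness comes for free from the product structure. With $f=\sum_k a_kp_x^{3-k}p_y^k$ homogeneous of degree $3$, one has $F_+F_-=-f(p_x,p_y)f(-p_x,p_y)$. For constancy, the paper computes $0=\{P,H\}=\frac{1-x^2}{(1+h)^2}\,p_x\sum_k c_k'(x)H^kL^{6-2k}$. It then uses linear independence of the $H^kL^{6-2k}$ at fixed $x$. This is a purely local argument with no input about level sets.

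\textbf{Your route.} You derive both properties from a global statement, $P=\Phi(H,L)$ with a single $\Phi$. That statement rests on each regular joint level set $\{H=E,\,L=\ell\}$ being connected and meeting every slice $x=x_0$ with $x_0^2<1-\ell^2/(2E)$. This is not a general fact about Liouville tori; Arnold--Liouville only describes connected components. It does hold here, because $\ell^2/(1-x^2)$ is a single well, and you should say so explicitly. You should also note that the regions $\{L^2<2H(1-x_0^2)\}$ for different $x_0$ overlap in an open set. That overlap is what forces the polynomials for different slices, and hence the $c_k$, to coincide. Given Step~5, your homogeneity step producing $\psi$ is superfluous.

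\textbf{Comparison.} Your route verifies concretely the functional dependence $P=\Phi(H,L)$ that the paper asserts only abstractly beforehand. The paper's route is shorter and needs no topology. Your flow argument can also be localised. Flowing briefly along $H$ from a point with $p_x\neq0$ changes $x$ while fixing $(H,L)$, which gives $c_k(x)=c_k(x')$ for nearby $x,x'$. That is just the integrated form of the paper's $c_k'=0$.
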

\begin{proof}
    Let us assume $L \neq 0$ and introduce a parameter $t \in \mathbb{R}$, such that $p_x = t\,p_y=t\,L$. Then,
    $$
        H = \frac{L^2}{2(1-x^2)}\left[
        \left(\frac{1-x^2}{1+h(x)}\right)^2 t^2 +
        1
    \right]
    $$
    which can be solved for $t^2$ as
    \begin{equation}\label{eq13}
        t^2 = \frac{(1+h(x))^2}{1-x^2} \left(
            \frac{2H}{L^2} - \frac{1}{1-x^2}
        \right).
    \end{equation}
    By definition~\eqref{eq11},
    \begin{align}
    \label{eq14}
        P 
        &:= F_+ F_- 
        = e^{\alpha\,y} e^{-\alpha\,y} \left( \sum_{k=0}^3 a_k t^{3-k} L^3 \right) \left( \sum_{k=0}^3 (-1)^k a_k t^{3-k} L^3 \right)
        \nonumber \\
        &= L^{6}\left[
            a_0^2 t^6 
            - (a_1^2 - 2 a_0 a_2) t^4
            + (a_2^2 - 2 a_1 a_3) t^2
            - a_3^2
        \right],
    \end{align}
    which depends only on even powers of $t$. Substituting~\eqref{eq13} into~\eqref{eq14}, we obtain
    a quasi-homogeneous polynomial in $(H,L)$ with weights $(2,1)$ and coefficients depending on $x$. Let us denote these coefficients by
    $$
        c_0(x),
        \quad
        c_1(x),
        \quad
        c_2(x),
        \quad
        \text{and}
        \quad
        c_3(x),
    $$
    so that
    $$
        P = \sum_{k=0}^{3} c_k(x) H^k L^{6-2k}.
    $$
    Initially, we assumed that $L$ is non-zero, however, this identity
    extends to $L=0$ by polynomiality.
    Finally, taking into account $\{P,H\} \equiv 0$,
    \begin{align*}
        0 &= \{ P, H \} 
        = \sum_{k=0}^{3} \left\{ c_k(x) H^k L^{6-2k}, H \right\}
        \\
        &= \sum_{k=0}^{3} \left( 
        H^k L^{6-2k} \left\{ c_k(x), H \right\} +
        c_k(x) \left\{ H^k L^{6-2k}, H \right\}
        \right).
    \end{align*}
    Since the second Poisson bracket in the summation is equal to zero,
    $$
        0 
        = \sum_{k=0}^{3} H^k L^{6-2k} \left\{ c_k(x), H \right\}
        = \frac{1-x^2}{(1+h(x))^2}\,p_x \sum_{k=0}^{3} H^k L^{6-2k} \,c^\prime_k(x),
    $$
    where $c^\prime_k = \frac{d}{dx}c_k$.
    For the open set where $p_x \neq 0$, $p_y\neq0$, and $x \in (-1,1)$, the factor
    $$
        \frac{1-x^2}{(1+h(x))^2}\,p_x
    $$
    is non-zero, so we get
    $$
        \sum_{k=0}^{3} c^\prime_k(x) H^k L^{6-2k} = 0.
    $$
    For fixed $x$, the functions $H^k L^{6-2k}$ are linearly independent as homogeneous polynomials in $(p_x,p_y)$, which gives
    $$
        c^\prime_k(x)=0,
        \qquad
        k=0,\dots,3.
    $$
    Hence each $c_k$ is constant.
    This proves the proposition.
\end{proof}

Thus, we have obtained an explicit form of the polynomial dependence of the first integrals, the existence of which was proved in~\cite{Matveev2025}.
Note that the Casimir is invariant under canonical coordinate transformations and therefore is convenient for comparing superintegrable geodesic flows written in different coordinates, which will be demonstrated in \mbox{Section~\ref{sec3}}. Furthermore, the explicit form of the Casimir, revealed by Proposition~\ref{prop1}, provides an algebraic relationship between the first integrals and hence the undetermined functions in their coefficients. The constant coefficients of the Casimir serve as parameters of the superintegrable system.

\begin{prop}\label{prop2}
    The functions $h(x), a_0(x), \dots, a_3(x)$ in~\eqref{eq2},~\eqref{eq9} and~\eqref{eq10} satisfy the following system of algebraic equations:
    \begin{align}\label{eq15}
        a_0^2(x) &= \frac{c_0(1-x^2)^3}{8(h(x)+1)^6},
        \nonumber
        \\
        a_1^2(x) - 2 a_0(x) a_2(x) &= 
        -\frac{
            \left(1-x^2\right) \left(
                3 c_0
                + 2 c_1 \left(1-x^2\right)
            \right)
        }{
            8 (h(x)+1)^4
        },
        \nonumber
        \\
        a_2^2(x) - 2a_1(x)a_3(x) &=
        \frac{
            3 c_0
            + 4 c_1 \left(1-x^2\right)
            + 4 c_2 \left(1-x^2\right)^2
        }{
            8 \left(1-x^2\right) (h(x)+1)^2
        },
        \\
        a_3^2(x) &=
        -\frac{
            c_0
            + 2 c_1 \left(1-x^2\right)
            + 4 c_2 \left(1-x^2\right)^2
            + 8 c_3 \left(1-x^2\right)^3
        }{
            8 \left(1-x^2\right)^3
        },
        \nonumber
        \\
        a_3(x) &= \frac{x}{\alpha(1-x^2)} a_2(x),
        \nonumber
    \end{align}
where $c_0$, $c_1$, $c_2$, and $c_3$ are real parameters defined in Proposition~\ref{prop1}.
\end{prop}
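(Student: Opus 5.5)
The first four equations come from matching coefficients in the identity of Proposition~\ref{prop1}, written in the variable $t=p_x/p_y$. The fifth equation comes from one component of the condition $\{H,F_+\}=0$. I will take the first four in turn, then the fifth.

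\textbf{Coefficient matching.} Put $u := 2H/L^2$ and $s := 1-x^2$. Then \eqref{eq13} reads
$$t^2 = \frac{(1+h)^2}{s}\Bigl(u-\frac1s\Bigr),$$
so $u$ is an affine function of $t^2$:
$$u = \frac1s + \frac{s}{(1+h)^2}\,t^2 .$$
Dividing \eqref{eq12} by $L^6$ gives
$$\frac{P}{L^6} = \frac{c_0}{8}u^3 + \frac{c_1}{4}u^2 + \frac{c_2}{2}u + c_3 .$$
Substituting the affine expression for $u$ makes this a cubic polynomial in $t^2$ with coefficients depending on $x$. Equation \eqref{eq14} gives the same quantity as a cubic in $t^2$ with coefficients
$$a_0^2,\quad -(a_1^2-2a_0a_2),\quad a_2^2-2a_1a_3,\quad -a_3^2 .$$
Both expressions hold identically for all $t$ at fixed $x\in(-1,1)$, so the coefficients of $t^6,t^4,t^2,t^0$ must agree. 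Expanding binomially:
\begin{enumerate}[label=(\roman*)]
\item The $t^6$ coefficient is $\frac{c_0}{8}\,s^3/(1+h)^6$.
\item The $t^4$ coefficient is
$$\frac{s^2}{(1+h)^4}\Bigl(\frac{3c_0}{8s}+\frac{c_1}{4}\Bigr) = \frac{s\,(3c_0+2c_1 s)}{8(1+h)^4}.$$
\item The $t^2$ coefficient is
$$\frac{s}{(1+h)^2}\Bigl(\frac{3c_0}{8s^2}+\frac{c_1}{2s}+\frac{c_2}{2}\Bigr) = \frac{3c_0+4c_1 s+4c_2 s^2}{8s(1+h)^2}.$$
\item The constant term is
$$\frac{c_0}{8s^3}+\frac{c_1}{4s^2}+\frac{c_2}{2s}+c_3 = \frac{c_0+2c_1 s+4c_2 s^2+8c_3 s^3}{8s^3}.$$
\end{enumerate}
Attaching the signs from \eqref{eq14} reproduces the first four lines of \eqref{eq15}.

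\textbf{The linear relation between $a_3$ and $a_2$.} Here I would write out $\{H,F_+\}=0$ directly. Write $F_+=e^{\alpha y}A$ with $A=\sum_k a_k p_x^{3-k}p_y^k$. Since $H$ does not depend on $y$,
$$\{F_+,H\}= e^{\alpha y}\bigl(\{A,H\}_{x} - \alpha A\,\partial_{p_y}H\bigr),$$
where $\{A,H\}_x = A_x H_{p_x} - A_{p_x}H_x$ and $\partial_{p_y}H = p_y/s$. This is a homogeneous quartic in $(p_x,p_y)$, and I will collect the $p_y^4$ coefficient.
\begin{itemize}
\item $A_xH_{p_x}$ always carries a factor $p_x$, so it contributes nothing.
\item $-A_{p_x}H_x$: the $p_y^3$ part of $A_{p_x}$ is $a_2 p_y^3$. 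The $p_y^2$ part of $H_x$ comes from differentiating $\frac{p_y^2}{2s}$, which gives $\frac{x}{s^2}p_y^2$. This contributes $-a_2\,x/s^2$.
\item $-\alpha A\,p_y/s$ contributes $-\alpha a_3/s$.
\end{itemize}
Setting the total to zero gives $\alpha a_3 = x a_2/s$, which is the last line of \eqref{eq15}. Division by $\alpha$ is allowed because $\alpha\neq0$ in this case.

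The only step needing care is the sign and prefactor bookkeeping: the $(-1)^k$ pattern in \eqref{eq14}, and whether the Poisson bracket convention \eqref{eq5} changes the sign in the last relation. The sign of the bracket does not matter, since $\{H,F_+\}=0$ and $\{F_+,H\}=0$ are equivalent. I therefore expect no real obstacle beyond tracking the powers of $s$ and $(1+h)$ carefully.
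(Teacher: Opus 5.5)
\textbf{Same route as the paper, but the derivation of the fifth equation has a sign error.} The first four lines come from matching coefficients in $F_+F_-=P(H,L)$. You dehomogenize with $t=p_x/p_y$ rather than matching monomials in $(p_x,p_y)$, which is equivalent. The last line comes from the $p_y^4$ coefficient of the bracket. Your expansion in $u=2H/L^2$ is correct, and coefficients (i)--(iv), with the signs from \eqref{eq14}, reproduce the first four lines of \eqref{eq15} exactly. Your final relation is right, but only because a second slip cancels the first.

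With the convention \eqref{eq5}, the $y$-part of $\{F_+,H\}$ is $+\partial_yF_+\,\partial_{p_y}H=+\alpha e^{\alpha y}A\,p_y/s$. This is the same term that gives $\{F_+,L\}=+\alpha F_+$. So the bracket is
$$\{F_+,H\}=e^{\alpha y}\bigl(A_xH_{p_x}-A_{p_x}H_x+\alpha A\,\partial_{p_y}H\bigr),$$
not the expression with $-\alpha A\,\partial_{p_y}H$ that you wrote.

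Taken literally, your bullets give the $p_y^4$ coefficient $-\frac{x}{s^2}a_2-\frac{\alpha}{s}a_3$. Setting this to zero yields $\alpha a_3=-x a_2/s$, which is the stated relation with $\alpha$ replaced by $-\alpha$. You then wrote $\alpha a_3=xa_2/s$ anyway.

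Your remark that the sign of the bracket does not matter only covers an overall sign. The error here is in the relative sign between the $x$-part and the $y$-part of the bracket, and that sign does matter. With the correct sign, the coefficient is $-\frac{x}{s^2}a_2+\frac{\alpha}{s}a_3$. This is the negative of the paper's expression for $\{H,F_+\}$, and its vanishing gives the last line of \eqref{eq15}.

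There is also a minor slip: the pure-$p_y$ part of $A_{p_x}$ is $a_2p_y^2$, not $a_2p_y^3$. Only with $p_y^2$ does the product with $\frac{x}{s^2}p_y^2$ land in degree $4$.
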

\begin{proof}
    By Proposition~\ref{prop1},
    $$
        F_+ F_- - \left(c_0 H^3 + c_1 H^2 L^2 + c_2 H L^4 + c_3 L^{6}\right) = 0.
    $$
    Direct substitution of~\eqref{eq2}, \eqref{eq3}, \eqref{eq9}, and~\eqref{eq10} into this equation makes its left-hand side a homogeneous polynomial of degree $6$ in $(p_x, p_y)$. The coefficients of its monomials have to vanish simultaneously, which gives the first four equations in~\eqref{eq15}.

    To obtain the last equation, one has to consider the Poisson bracket $\{ H, F_+ \}$, which is a homogeneous polynomial of degree $4$ in $(p_x,p_y)$. Since $F_+$ is a first integral, the bracket vanishes, and hence so do all monomials of that polynomial. In particular, the $p_y^4$-term is
    $$
        e^{\alpha y}\left(
        \frac{x}{(1-x^2)^2}a_2(x) -
        \frac{\alpha}{1-x^2}a_3(x)
        \right)
        p_y^4,
    $$
    which leads to the last equation in~\eqref{eq15}.
    This proves the proposition.
\end{proof}

Note that~\eqref{eq15} is a system of five algebraic equations for five unknown functions $a_0(x), \dots, a_3(x)$ and $h(x)$. This allows us to find all the unknown coefficients of $F_+$ and $F_-$, as well as the explicit form of the metric that depends on $h(x)$. Thus, the existence of the last equation in~\eqref{eq15} is essential since it gives the right number of equations.
The following proposition helps eliminate some trivial solutions of the algebraic system derived in Proposition~\ref{prop2}.

\begin{prop}\label{prop3}
For non-trivial cubic first integrals $F_+$ and $F_-$, the parameter $c_0$ is non-zero. 
\end{prop}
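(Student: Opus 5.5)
Assume $c_0=0$; we will show that $F_+$ must vanish identically, so $c_0\neq0$ whenever $F_\pm$ are non-trivial. The first equation of \eqref{eq15} then gives $a_0\equiv 0$ on $(-1,1)$, because $(1-x^2)^3/(h+1)^6$ is nonzero there by \ref{condition:five}. We then feed this back into the remaining equations of Proposition~\ref{prop2} and into the other monomial equations coming from $\{H,F_+\}=0$.

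With $a_0=0$, the second equation of \eqref{eq15} reads
\[
a_1^2=-\frac{c_1(1-x^2)^2}{4(h+1)^4}.
\]
The key additional input is the top coefficient of $\{H,F_+\}$. I expect the $p_x^4$-monomial of this bracket to involve only $a_0$ and $a_0'$, and the $p_x^3p_y$-monomial to involve $a_1'$, $a_0$ and $\alpha a_0$. Concretely, I will compute
\[
\{H,F_+\}=e^{\alpha y}\Bigl(\tfrac{1-x^2}{(h+1)^2}p_x\,\partial_x+\tfrac{p_y}{1-x^2}\,\alpha\Bigr)\Bigl(\sum a_kp_x^{3-k}p_y^k\Bigr)-\partial_xH\cdot\partial_{p_x}F_+.
\]
Here $\partial_xH$ contains $p_x^2$ and $p_y^2$ terms. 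Collecting monomials should give a triangular system, schematically
\begin{align*}
p_x^3p_y:&\quad \tfrac{1-x^2}{(h+1)^2}a_1'-a_1\,\partial_x\!\Bigl(\tfrac{1-x^2}{2(h+1)^2}\Bigr)=0,\\
p_x^2p_y^2:&\quad \tfrac{1-x^2}{(h+1)^2}a_2'+\tfrac{\alpha}{1-x^2}a_1-(\dots)a_2=0,
\end{align*}
and so on. Integrating the first line gives $a_1=\kappa\,(1-x^2)^{1/2}/(h+1)$. Comparing with the expression for $a_1^2$ above forces $\kappa^2=-c_1(1-x^2)/4(h+1)^2$ to be constant. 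I expect this either forces $\kappa=0$ or contradicts the endpoint behaviour: at $x=\pm1$ the right side vanishes while $\kappa$ is constant, so $\kappa=0$, giving $a_1=0$ and $c_1=0$.

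Next, with $a_0=a_1=0$, the same procedure applies one level down. The $p_x^2p_y^2$ equation becomes a first-order linear ODE for $a_2$ alone, and its solution has the form $a_2=\lambda\cdot(\text{explicit function of }x,h)$. The third equation of \eqref{eq15} becomes
\[
a_2^2=\frac{4c_1(1-x^2)+4c_2(1-x^2)^2}{8(1-x^2)(h+1)^2}=\frac{c_2(1-x^2)}{2(h+1)^2}.
\]
We compare this with the ODE solution and with the last equation $a_3=\frac{x}{\alpha(1-x^2)}a_2$, and substitute into the $p_xp_y^3$ equation, which links $a_3'$, $a_2$ and $\alpha$. The aim is to show that $\lambda=0$. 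Once $a_2=0$, we get $a_3=0$ from the last equation of \eqref{eq15}, so $F_+\equiv0$, a contradiction.

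The main obstacle is this third step. There, $\alpha$ and an unknown $h$ appear together, and the consistency conditions must be shown to fail for all admissible $h$. Working the $p_xp_y^3$ equation with $a_3=xa_2/(\alpha(1-x^2))$ should yield a relation like
\[
\alpha^2=\text{(function of }x),
\]
which cannot hold identically for constant $\alpha\neq0$ unless $a_2\equiv0$. If that fails, I would alternatively invoke smoothness at the poles $x=\pm1$: the last equation of \eqref{eq15} makes $a_3$ blow up unless $a_2(\pm1)=0$, and combining this with $a_2^2\propto(1-x^2)/(h+1)^2$ and the fourth equation should force $c_2=c_3=0$. Then $P\equiv0$, which contradicts the non-triviality of $F_\pm$, since a product of two nonzero polynomials is nonzero.
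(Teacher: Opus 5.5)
\paragraph{There is a genuine gap.} You try to prove that $c_0=0$ forces $F_+\equiv 0$. That is false in the generality you argue in. Take the round sphere $h\equiv0$, which satisfies (F1)--(F5). There $L_1+iL_2=e^{iy}\bigl(-i\sqrt{1-x^2}\,p_x-\tfrac{x}{\sqrt{1-x^2}}\,p_y\bigr)$ is a linear integral with $\alpha=i$. Both $L^2(L_1+iL_2)$ (with $\alpha=i$) and $L(L_1+iL_2)^2$ (with $\alpha=2i$) are nonzero cubic eigenvectors of $\mathcal{L}$ with $a_0\equiv0$, hence $c_0=0$. Your argument never uses $h\not\equiv0$, so it cannot go through.

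The failure shows up concretely in your second step, which has a computational error. Put $f:=(1-x^2)/(h+1)^2$ and $a_0=0$. The $p_x^3p_y$ coefficient of $e^{-\alpha y}\{H,F_+\}$ is $f'a_1-fa_1'$: the $\tfrac12$ in $\partial_xH$ cancels against the $2$ from $\partial_{p_x}(a_1p_x^2p_y)$. Hence $a_1=\kappa f=\kappa(1-x^2)/(h+1)^2$, not $\kappa(1-x^2)^{1/2}/(h+1)$. The second equation of \eqref{eq15} then reduces to the constant relation $c_1=-4\kappa^2$. There is no endpoint contradiction; the example $L(L_1+iL_2)^2$ has $\kappa=-1$ and $c_1=-4$.

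The later steps fail in the same way. $L^2(L_1+iL_2)$ has $a_2=-i\sqrt{1-x^2}\neq0$, so your third step (forcing $\lambda=0$) cannot succeed. The fallback about $a_3$ blowing up at $x=\pm1$ is not a contradiction either. The coordinates are singular at the poles, and the coefficient $x/\sqrt{1-x^2}$ of the perfectly regular integral $L_1+iL_2$ also blows up there.

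\paragraph{What is missing.} You need the meaning of non-triviality used in the statement: the cubic integral should not be expressible through integrals of lower degree. With that reading no ODE analysis is needed. From $a_0\equiv0$ you already have $F_\pm=\pm L\,\hat F_\pm$ with $\hat F_\pm$ quadratic in the momenta. The identity $0=\{H,F_\pm\}=\pm L\{H,\hat F_\pm\}$ shows that $\hat F_\pm$ are integrals, first on $L\neq0$ and then everywhere by polynomiality. So $F_\pm$ are products of lower-degree integrals, i.e.\ trivial; this is exactly the paper's proof. If you really want the stronger conclusion $F_+\equiv0$, the local monomial equations alone will not give it. You would have to use $h\not\equiv0$ together with a global result on quadratic integrals, such as the Kiyohara theorem cited later in the paper.
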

\begin{proof}
    We argue by contradiction and assume that $c_0=0$. Then, by the first equation in~\eqref{eq15}, $a_0(x) \equiv 0$ and
    $$
        F_+ = p_y \, \hat{F}_+
        = L \, \hat{F}_+,
        \qquad
        F_- = -p_y \, \hat{F}_-
        = -L \, \hat{F}_-,
    $$
    where
    \begin{align*}
        \hat{F}_+ &:= e^{\alpha y}\left[
            a_1(x)p_x^2 + a_2(x) p_x p_y + a_3(x) p_y^2 
        \right],
        \\
        \hat{F}_- &:= e^{-\alpha y}\left[
            a_1(x)p_x^2 - a_2(x) p_x p_y + a_3(x) p_y^2 
        \right]
    \end{align*}
    are homogeneous polynomials in momenta of degree two. Since $F_+$, $F_-$, and $L$ are first integrals, so are $\hat{F}_+$ and $\hat{F}_-$, because
    $$
        0 = \{H, F_\pm \} 
        = \{H, \pm L \hat{F}_\pm \}
        = \pm L \{H, \hat{F}_\pm \}.
    $$
    Hence $\{H, \hat{F}_\pm\}=0$ on the open set $L\neq 0$, and therefore everywhere by polynomiality.
    Thus, the cubic integrals are trivial in the sense that they are products of lower-degree integrals. This proves the proposition.
\end{proof}

Let us find the functions $a_0(x), \dots, a_3(x)$ from the system derived in Proposition~\ref{prop2}. First, we solve the last, third and second equations for $a_2$, $a_1$, and $a_0$, respectively. Then, from the fourth equation we find $a_3(x)$. Thus, we obtain
\begin{align}\label{eq16}
    a_0(x) &= \frac{\delta_1 \sqrt{(1-x^2)^3}}{16 \sqrt{2}}
    \frac{
        \left(
            \alpha^2 [h+1]^2 R_2
            - x^2 R_1
        \right)^2
        + 4 x^4 R_2 \left(
            2 c_1 \left[1-x^2\right]
            + 3 c_0
        \right)
    }{
        \alpha\,x^3 (h+1)^4 R_2 \sqrt{R_2}
    },
    \nonumber
    \\
    a_1(x) &= \frac{\delta_1 \sqrt{1-x^2}}{4 \sqrt{2}}
    \frac{
        \alpha^2 (h+1)^2 R_2
        - x^2 R_1
    }{
    x^2 (h+1)^2 \sqrt{R_2}
    },
    \nonumber
    \\
    a_2(x) &= \frac{\delta_1}{2\sqrt{2}}
    \frac{\alpha\sqrt{R_2}}{x\sqrt{1-x^2}},
    \\
    a_3(x) &= \frac{\delta_1}{2\sqrt{2}} \frac{\sqrt{R_2}}{\sqrt{(1-x^2)^3}},
    \nonumber
\end{align}
where
\begin{align}\label{eq17}
    R_1 &:= 
    4 c_2 x^4
    - 4 (c_1 + 2c_2)x^2
    + 3c_0 + 4c_1 + 4c_2,
    \\
    R_2 &:= 
    8 c_3 x^6 
    - 4(c_2+6c_3)x^4 
    + 2(c_1+4c_2+12c_3)x^2
    - (c_0+2c_1+4c_2+8c_3),
    \nonumber
\end{align}
and constant $\delta_1=\pm 1$ simply flips the sign of $F_+$ and $F_-$.

Thus, the coefficients of the cubic integrals are known up to the function $h(x)$ determining the metric. To find all possible Zoll metrics that admit a non-trivial cubic integral, we substitute the recently found $a_0(x)$ into the first equation in~\eqref{eq15}, which leads to
\begin{equation}\label{eq18}
    (1-x^2)^3\mathcal{P}_1(h,x) = 0,
\end{equation}
where $\mathcal{P}_1$ is a polynomial with degrees $(8,24)$ in $(h,x)$. Moreover, it depends only on even powers of $x$. Equation~\eqref{eq18} defines an algebraic curve, and any function $h(x)$ that, when substituted into~\eqref{eq2}, yields a superintegrable Hamiltonian system with a non-trivial cubic integral must satisfy this equation.
Thus, this is a necessary condition. To make it sufficient, one has to ensure that $h(x)$ has all the properties described in Section~\ref{sec11}.

First, we note that
$$
    \mathcal{P}_1(h,0) = \alpha^8 (1+h)^8 \left( c_0 + 2 c_1 + 4 c_2 + 8 c_3 \right)^4 = 0.
$$
The curve must have points on the line $x=0$, which implies
\begin{equation}\label{eq19}
    c_3=-\frac{c_0+2c_1+4c_2}{8}.
\end{equation}
Taking into account~\eqref{eq19},
$$
    \mathcal{P}_1(h,x) = x^8 \mathcal{P}_2(h,x)
$$
for some new polynomial $\mathcal{P}_2$ with degrees $(8, 16)$ in $(h,x)$.

Second, by conditions~\ref{condition:two} and~\ref{condition:three}, we know that $h(0)=0$. Hence, the point $(0,0)$ must belong to the algebraic curve. By condition~\ref{condition:four}, the same applies to the points $(0,1)$ and $(0,-1)$, which leads to
\begin{align*}
    \mathcal{P}_2(0,0) &= (\alpha^2+1)^4 (3 c_0 + 4 c_1 + 4 c_2)^4 = 0,
    \\
    \mathcal{P}_2(0,1) &= \mathcal{P}_2(0,-1) = c_0^4(\alpha^2+1)^3(\alpha^2+9) = 0.
\end{align*}
By Proposition~\ref{prop3}, $c_0 \neq 0$. Since both equations must be satisfied, we find that there are only two possible cases: (i) $\alpha = \pm 3i$ and $3 c_0 + 4 c_1 + 4 c_2 = 0$, or (ii) $\alpha = \pm i$.

In case~(i), the equation takes the form
$$
    x^8 \mathcal{P}_3(h,x) = 0,
$$
where $\mathcal{P}_3$ is a polynomial in $(h,x)$ and
$$
    \mathcal{P}_3(0,0) = 2025 (3c_0+2c_1)^4.
$$
Finally,
$$
    \mathcal{P}_3(h,x) \Big|_{c_1 = -3c_0/2} = 
    9 c_0^4 h (h+2) (3h+2)^3(3h+4)^3 x^8 = 0,
$$
so the trivial solution $h \equiv 0$ is the only admissible one. Therefore, $\alpha \neq \pm 3i$.

In case~(ii), when $\alpha=\pm i$,
$$
    \mathcal{P}_2(h,x) \Big|_{\alpha = \pm i} = \mathcal{P}_4(h,x) \mathcal{P}_5(h,x),
$$
where
\begin{align*}
    \mathcal{P}_4(h,x) &:= 
    \mathcal{Q}_2^2(x) h^4 - 2 x^2 \mathcal{Q}_1(x) \mathcal{Q}_2(x) h^2
    \\
    &- x^4 (1-x^2)^2 \left(
        3 c_0^2
        + 4 c_0 c_1
        + 16 c_0 c_2
        - 4 c_1^2
    \right),
    \\
    \mathcal{P}_5(h,x) &:= 
        h (h+4) \left[
            \left(
                h^2 + 4 h + 8
            \right) \mathcal{Q}_2(x)
            - 2 x^2 \mathcal{Q}_1(x)
        \right] \mathcal{Q}_2(x)
    \\
        & - \left(1-x^2\right) \mathcal{Q}_3(x),
\end{align*}
and
\begin{align}\label{eq20}
    \mathcal{Q}_1(x) &:= -(c_0 + 2 c_1) x^2 + 3 c_0 + 2 c_1,
    \nonumber
    \\
    \mathcal{Q}_2(x) &:= (c_0 + 2 c_1 + 4 c_2) x^4
    - (3 c_0 + 6 c_1 + 8 c_2) x^2
    + 3 c_0 + 4 c_1 + 4 c_2,
    \nonumber
    \\
    \mathcal{Q}_3(x) &:= 
    \left(
        21 c_0^2
        + 92 c_0 c_1
        + 144 c_0 c_2
        + 320 c_1 c_2
        + 100 c_1^2
        + 256 c_2^2
    \right) x^6
    \\
    &- 3 \left(
        39 c_0^2
        + 148 c_0 c_1
        + 208 c_0 c_2
        + 384 c_1 c_2
        + 140 c_1^2
        + 256 c_2^2
    \right) x^4
    \nonumber
    \\
    &+ 24 \left(
        3 c_0
        + 4 c_1
        + 4 c_2
    \right) \left(
        3 c_0
        + 6 c_1
        + 8 c_2
    \right) x^2
    - 16 \left(
        3 c_0
        + 4 c_1
        + 4 c_2
    \right)^2.
    \nonumber
\end{align}
The factor $\mathcal{P}_4(h,x)$ may contain a solution, because the points $(0,0)$, $(0,1)$ and $(0,-1)$ belong to the curve that it defines. As for the second factor,
$$
    \mathcal{P}_5(0,0) = 16 (3 c_0 + 4 c_1 + 4 c_2)^2 = 0.
$$
Proceeding as in case~(i), we obtain
$$
    \mathcal{P}_5(h,x) \Big|_{c_1 = -3c_0/2,\,c_2=3c_0/4} = c_0^2 h (h+2)^2 (h+4) x^8 = 0,
$$
which leads to a trivial solution. Thus, to satisfy conditions
$$
    h(-1)=h(0)=h(1)=0
$$
we have to consider the algebraic curve given by
\begin{equation}\label{eq21}
    \mathcal{P}_4(h,x) = 0
\end{equation}
with parameters $\alpha = \pm i$ and~\eqref{eq19}. Let us note that by Proposition~\ref{prop3} and the homogeneity of $\mathcal{P}_4$ in $c_i$'s, we could rescale the parameters to set \mbox{$c_0 = 1$}. However, we will not do this now and keep the parameter $c_0$ for more convenient comparisons later.

\begin{prop}\label{prop4}
    If $h(x)$ is a non-trivial real branch through $h(1)=0$, then
    $$
        \frac{c_1}{c_0},\frac{c_2}{c_0} \in \mathbb{R}.
    $$
\end{prop}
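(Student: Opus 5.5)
We will use the curve \eqref{eq21}, $\mathcal{P}_4(h,x)=0$, and regard it as a quadratic equation in $u:=h^2$:
$$
\mathcal{Q}_2^2(x)\,u^2-2x^2\mathcal{Q}_1(x)\mathcal{Q}_2(x)\,u-x^4(1-x^2)^2\Delta=0,
\qquad
\Delta:=3c_0^2+4c_0c_1+16c_0c_2-4c_1^2 .
$$
Since $\mathcal{P}_4$ is homogeneous of degree $2$ in $(c_0,c_1,c_2)$ and $c_0\neq0$ by Proposition~\ref{prop3}, we divide by $c_0^2$. Setting $s_1:=c_1/c_0$ and $s_2:=c_2/c_0$, the curve then depends only on these two complex numbers. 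The plan is to show that a real branch $h$ through $h(1)=0$ forces $s_1$ and $s_2$ to be real. We do this by extracting from a real solution enough real Taylor coefficients at a convenient point, and then expressing $s_1,s_2$ as real rational functions of these coefficients.

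\textbf{Step 1: real polynomial data along the branch.} Let $h$ be a real smooth function with $\mathcal{P}_4(h(x),x)\equiv0$ on an interval, which is not identically zero. Dividing by $\mathcal{Q}_2^2$ wherever $\mathcal{Q}_2\neq0$ gives
$$
u^2-2x^2\frac{\mathcal{Q}_1}{\mathcal{Q}_2}u-x^4(1-x^2)^2\frac{\Delta}{\mathcal{Q}_2^2}=0 .
$$
The idea is to treat $u(x)=h(x)^2$ as real for all real $x$ and to compare coefficients. Multiplying the relation by $c_0^{-2}$ turns it into a polynomial identity in $x$ whose unknowns are $s_1,s_2$, with real coefficients coming from the Taylor expansion of the real function $u$.

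\textbf{Step 2: expansion near $x=0$ (or $x=1$).} Near $x=0$ we have $u=x^2v$ with $v$ real-analytic along the branch, and $v(0)=h'(0)^2$ is real. The leading-order balance at $x^0$ reads
$$
\mathcal{Q}_2(0)^2v_0^2-2\mathcal{Q}_1(0)\mathcal{Q}_2(0)v_0-\Delta=0 ,
$$
where $\mathcal{Q}_1(0)=c_0(3+2s_1)$ and $\mathcal{Q}_2(0)=c_0(3+4s_1+4s_2)$. Higher orders in $x^2$ then give further polynomial relations that involve $v_0,v_1,\dots$, all of which are real. Collecting the first two or three of these relations, we solve linearly for the combinations $3+4s_1+4s_2$, $3+2s_1$ and $\Delta/c_0^2$ in terms of the real Taylor data.

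The most useful observation is a discriminant identity. Writing $\mathcal{Q}_1$, $\mathcal{Q}_2$ and $\Delta$ in terms of $s_1,s_2$, we expect
$$
x^4\mathcal{Q}_1^2+x^4(1-x^2)^2\Delta
$$
to factor nicely. With this, the explicit root $u=x^2\bigl(\mathcal{Q}_1\pm\sqrt{\cdot}\bigr)/\mathcal{Q}_2$ becomes a real function of $x$. Requiring $u$ to be real on an interval makes
$$
\frac{\mathcal{Q}_1}{\mathcal{Q}_2}\quad\text{and}\quad \frac{x^4\mathcal{Q}_1^2+x^4(1-x^2)^2\Delta}{\mathcal{Q}_2^2}
$$
real-valued on that interval. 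This holds because $u_1+u_2$ and $u_1u_2$ are real whenever both roots are real. If the branch partner $u_2$ is not real, we instead use that $u_1$ is real and argue with its analytic continuation.

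\textbf{Step 3: conclude reality.} A rational function of $x$ that is real on an interval has real coefficients after normalisation. Hence the coefficient ratios of $\mathcal{Q}_1/\mathcal{Q}_2$ are real, namely $(1+2s_1):(3+2s_1)$ together with the coefficients of $\mathcal{Q}_2$ normalised by one of them, and from these $s_1,s_2\in\mathbb{R}$ follow by a short elimination. Degenerate cases must be handled separately. These arise when $\mathcal{Q}_1$ and $\mathcal{Q}_2$ share a factor, or when $\mathcal{Q}_2$ vanishes identically to low order, and we will treat them by hand using $h(1)=0$ and non-triviality.

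\textbf{Main obstacle.} The crux is Step 2. We need to know that the symmetric functions of the two roots $u_1,u_2$ are real. This is immediate only if both roots are real, and a single real branch does not give that automatically. We would first try to show that the discriminant $\mathcal{Q}_1^2+(1-x^2)^2\Delta$ must be nonnegative near $x=1$ using $h(1)=0$: there $u\to0$ forces a specific root, so $u_1=u(x)$ is real. Then $u_2$ equals the product $-x^4(1-x^2)^2\Delta/\mathcal{Q}_2^2$ divided by $u_1$, and realness of $u_2$ reduces to realness of $\Delta/\mathcal{Q}_2^2$. Failing that, we fall back on the Taylor-coefficient bookkeeping at $x=1$ to obtain $s_1,s_2$ directly as rational functions of real derivatives $h'(1),h''(1),\dots$.
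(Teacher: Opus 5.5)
Your main route (Steps 2--3) has a genuine gap, and it is the one you flag yourself. To get $u_1+u_2=2x^2\mathcal{Q}_1/\mathcal{Q}_2$ and $u_1u_2$ real you need both roots to be real, and nothing gives you that. Your proposed repair is circular: saying that reality of $u_2=-x^4(1-x^2)^2\Delta/(\mathcal{Q}_2^2u_1)$ ``reduces to realness of $\Delta/\mathcal{Q}_2^2$'' only restates the claim to be proved. The expansion at $x=0$ does not help either. There $v_0=h'(0)^2\neq0$, and $\mathcal{Q}_1(0)$, $\mathcal{Q}_2(0)$ depend on the unknowns, so nothing can be ``solved linearly''. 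Put $A=3+4s_1+4s_2$ and $B=3+2s_1$. Using the identity $B^2-4A=-\Delta/c_0^2$, your leading relation becomes $(Av_0-B)^2=4A$. That is one polynomial equation in two complex unknowns with real data, and it forces nothing to be real. Also, the hypothesis is local at $x=1$, not at $x=0$.

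The missing idea is your unexecuted fallback, which is the paper's proof. Set $t=1-x^2$ and $\hat h(t)=h^2(x)$. After dividing by $c_0^2$, the curve reads $q_2^2\hat h^2-2(1-t)q_1q_2\hat h-(1-t)^2t^2D_0=0$, where $D_0=\Delta/c_0^2$. The key point is that $q_1(0)=2$ and $q_2(0)=1$ \emph{independently of the parameters}. So $\partial_{\hat h}=-4\neq0$ at $(0,0)$, and the implicit function theorem shows that the real branch with $\hat h(0)=0$ coincides with the unique analytic one. Its Taylor coefficients in $\hat h=\tau_1t^2+\tau_2t^3+\dots$ are therefore real. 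Since $q_1(0)q_2(0)=2$ carries no parameters, the $t^2$ and $t^3$ coefficients give relations linear in $D_0$ and $\sigma_1$ respectively: $D_0=-4\tau_1$ and $2\tau_2+(6\sigma_1+5)\tau_1=0$. Non-triviality enters exactly here. If $\tau_1=0$, then $D_0=0$, and the branch through $0$ is $\hat h\equiv0$. Hence $\tau_1\neq0$, so $\sigma_1$ is real, and then $\sigma_2$ is real from $D_0=-4\tau_1$. Your proposal neither secures analyticity of the branch (needed before any Taylor bookkeeping) nor says where non-triviality is used. Your analytic-continuation idea could also be completed, but it is likewise absent: the real branch is a common root of the curve and of its complex-conjugate curve. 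If the quadratic is irreducible over $\mathbb{C}(t)$, the two quadratics must be proportional, and normalising at $t=0$ forces real coefficients. The reducible case is exactly $D_0=0$, which gives only the trivial branch.
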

\begin{proof}
    By Proposition~\ref{prop3}, we can set
    $$
        \sigma_1 := \frac{c_1}{c_0},
        \qquad
        \sigma_2 := \frac{c_2}{c_0},
    $$
    and divide $\mathcal{P}_4(h,x)$ by $c_0^2$. Let $t = 1-x^2$ and $\hat{h}(t)=h^2(x)$. The equation~\eqref{eq21} becomes
    $$ 
        q_2^2(t) \hat{h}^2 - 2 (1-t) q_1(t) q_2(t) \hat{h}
        - (1-t)^2 t^2 \left(
            3
            + 4 \sigma_1
            + 16 \sigma_2
            - 4 \sigma_1^2
        \right) = 0,
    $$
    where
    $$
        q_1(t) :=  
        (1 + 2 \sigma_1) t + 2
        \quad
        \text{and}
        \quad
        q_2(t) := 
        (1 + 2 \sigma_1 + 4 \sigma_2) t^2
        + (1 + 2 \sigma_1) t
        + 1.
    $$
    At $(\hat{h},t)=(0,0)$, the derivative with respect to $\hat{h}$ is
    $$
        -2q_1(0)q_2(0) = -4 \neq 0.
    $$
    Therefore, there is a unique analytic branch through $\hat{h}(0)=0$ by the implicit function theorem. Since $h$ is real-valued, $\hat{h}$ is real-valued, so its Taylor coefficients are real. Write
    $$
        \hat{h}(t) = \tau_1 t^2 + \tau_2 t^3 + O(t^4),
        \qquad
        \tau_1,\tau_2 \in \mathbb{R}.
    $$
    Substituting into the equation gives the first two coefficients:
    $$
        3
        + 4 \sigma_1
        + 16 \sigma_2
        - 4 \sigma_1^2 
        = -4 \tau_1,
        \qquad
        2\tau_2 
        + (6 \sigma_1 + 5) \tau_1
        = 0.
    $$
    Hence, if the branch is non-trivial, so $\tau_1 \neq 0$,
    $$
        \sigma_1 = -\frac{5\tau_1+2\tau_2}{6\tau_1},
        \qquad
        \sigma_2 = -\frac{9\tau_1^3-7\tau_1^2-8\tau_1\tau_2-\tau_2^2}{\tau_1^2}.
    $$
    Thus, $\sigma_1, \sigma_2 \in \mathbb{R}$. This proves the proposition.
\end{proof}

The algebraic curve~\eqref{eq21} is symmetric with respect to both axes, because it depends only on even degrees of $h$ and $x$. We are interested in the odd functions lying on this curve, so, first, we solve~\eqref{eq21} as a quadratic equation in $h^2$, i.\,e.
\begin{equation*}
    h^2_\pm(x) = x^2\frac{\mathcal{Q}_1(x)\mathcal{Q}_2(x) \pm 2\sqrt{c_0\,\mathcal{Q}_2^3(x)}}{\mathcal{Q}_2^2(x)}.
\end{equation*}
Assuming $c_0 \in \mathbb{R}^\times$ by homogeneity of the parameters, the condition~\ref{condition:four} is satisfied if and only if
$$
    h(x) = \delta(x) |x| \sqrt{
        \frac{\mathcal{Q}_1(x) - 2 \sgn c_0 \sqrt{c_0\,\mathcal{Q}_2(x)}}{\mathcal{Q}_2(x)}
    }
$$
for some real function $\delta(x)$ such that $\delta^2(x)=1$. This satisfies condition~\ref{condition:three} if and only if $\delta(-x)=-\delta(x)$, that is
$$
    \delta(x) = \delta_0 \, \mathrm{sgn}\,x
$$
for a constant discrete parameter $\delta_0 = \pm 1$, and therefore
\begin{equation}\label{eq22}
    h(x) = \delta_0 \, x \sqrt{
        \frac{\mathcal{Q}_1(x) - 2 \sgn c_0\sqrt{c_0\,\mathcal{Q}_2(x)}}{\mathcal{Q}_2(x)}
    }.
\end{equation}
Let us now determine when this function satisfies condition~\ref{condition:one}.

\begin{prop}\label{prop5}
    The function $h(x)$ defined in~\eqref{eq22} is real-valued for $x \in [-1,1]$ if and only if
    \begin{align}\label{eq23}
        c_0 (3c_0 + 2c_1) &> 0,
        \nonumber \\
        c_0 (3c_0 + 4c_1 + 4c_2) &> 0,
        \\
        16 c_0 c_2 + (3c_0-2c_1)(c_0+2c_1) &\leq 0.
        \nonumber
    \end{align}
    Furthermore, if
    $$
        16 c_0 c_2 + (3c_0-2c_1)(c_0+2c_1) = 0
    $$
    then $h(x) \equiv 0$.
\end{prop}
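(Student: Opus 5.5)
The plan is to remove the homogeneous parameter $c_0$ first. Then the whole question reduces to the signs of two explicit polynomials in $t=1-x^2\in[0,1]$, together with one algebraic identity linking them.

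\textbf{Step 1: normalisation.} Put $\sigma_1=c_1/c_0$ and $\sigma_2=c_2/c_0$, which are real by Proposition~\ref{prop4}. Write $\mathcal{Q}_1=c_0 q_1$ and $\mathcal{Q}_2=c_0 q_2$, where, in the variable $t=1-x^2$ used in the proof of Proposition~\ref{prop4},
$$
q_1(t)=(1+2\sigma_1)t+2,\qquad q_2(t)=(1+2\sigma_1+4\sigma_2)t^2+(1+2\sigma_1)t+1 .
$$
If $q_2\ge 0$, then $\sgn c_0\sqrt{c_0\mathcal{Q}_2}=\sgn c_0\,|c_0|\sqrt{q_2}=c_0\sqrt{q_2}$. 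Hence the radicand in~\eqref{eq22} becomes
$$
\frac{N(t)}{q_2(t)},\qquad N:=q_1-2\sqrt{q_2},
$$
and this no longer depends on $c_0$. Since $x$ is a real factor, $h$ is real on $[-1,1]$ exactly when $q_2>0$ (so that we do not divide by zero) and $N/q_2\ge 0$ for $t\in[0,1]$.

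Under the same normalisation, multiplying each condition by $c_0^2$ shows that~\eqref{eq23} is equivalent to the three conditions
$$
3+2\sigma_1=q_1(1)>0,\qquad 3+4\sigma_1+4\sigma_2=q_2(1)>0,\qquad D\le 0,
$$
where $D:=16\sigma_2+(3-2\sigma_1)(1+2\sigma_1)$.

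\textbf{Step 2: the key identity.} Expand directly to get
$$
q_1^2-4q_2=-D\,t^2 .
$$
Note also that $q_2(0)=1$ and $q_1(0)=2$, so $N(0)=0$; this is the endpoint condition $h(\pm1)=0$.

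\textbf{Step 3: necessity.}
\begin{itemize}
\item Evaluate at $x=0$, i.e.\ $t=1$. We need $q_2(1)>0$, which is the second condition. We also need $N(1)\ge 0$, which forces $q_1(1)\ge 2\sqrt{q_2(1)}>0$; this is the first condition.
\item Suppose $D>0$. Then for $t\in(0,1]$ we have $q_1^2<4q_2$, so $|q_1|<2\sqrt{q_2}$ and hence $N<0$ while $q_2>0$. The radicand is negative, so $h$ is not real. Therefore $D\le 0$.
\end{itemize}

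\textbf{Step 4: sufficiency.}
\begin{itemize}
\item $q_1$ is linear in $t$ and positive at both endpoints $t=0$ and $t=1$. Hence $q_1>0$ on $[0,1]$.
\item Since $D\le0$, the identity gives $4q_2=q_1^2+Dt^2\le q_1^2$. This is only an upper bound on $q_2$, so positivity has to come from elsewhere. The cleanest route is to observe that $q_2$ vanishing somewhere in $(0,1)$ would require it to change sign or touch zero. Both endpoint values $q_2(0)=1$ and $q_2(1)>0$ are positive, so I check the quadratic $q_2$ directly on $[0,1]$.
\item Once $q_2>0$, the bound $q_1\ge 2\sqrt{q_2}$ follows from $q_1^2\ge 4q_2$ and $q_1>0$. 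Hence $N\ge0$ and $h$ is real.
\end{itemize}

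\textbf{The main obstacle} is exactly this positivity of $q_2$ on the open interval; everything else is bookkeeping. I would first try the factorisation
$$
4q_2=(q_1-\sqrt{-D}\,t)(q_1+\sqrt{-D}\,t)
$$
and show that both linear factors stay positive on $[0,1]$:
\begin{itemize}
\item Both factors equal $2>0$ at $t=0$.
\item Their product is $4q_2(1)>0$ at $t=1$, and the factor $q_1+\sqrt{-D}\,t$ is clearly positive there. So the factor $q_1-\sqrt{-D}\,t$ is also positive at $t=1$.
\item Each factor is linear in $t$, so positivity at both endpoints gives positivity on all of $[0,1]$.
\end{itemize}
This proves $q_2>0$ on $[0,1]$.

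\textbf{Step 5: the degenerate case.} If $D=0$, the identity gives $q_1^2=4q_2$. Since $q_1>0$, this means $q_1=2\sqrt{q_2}$, so $N\equiv0$ and therefore $h\equiv 0$.
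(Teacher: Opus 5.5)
Your proof is correct, and its overall structure matches the paper's. Both arguments remove the sign of $c_0$, though the paper divides only by $\sgn c_0$ and works in $X=x^2=1-t$. Both use the identity $q_1^2-4q_2=-Dt^2$, which is the paper's $\mathcal{S}_1^2-4s_0\mathcal{S}_2=(1-X)^2\mathcal{D}$ with $\mathcal{D}=-D$ after normalisation. Both read off the two linear conditions at $x=0$ and get $q_1>0$ from linearity.

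The one genuinely different step is the positivity of $q_2$ on $[0,1]$.

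\begin{itemize}
\item The paper writes $\mathcal{S}_2=s_0X^2+(3s_0+2s_1)X(1-X)+(3s_0+4s_1+4s_2)(1-X)^2$ in the Bernstein basis. Positivity then follows from the first two inequalities of \eqref{eq23} alone, without using $\mathcal{D}\ge 0$.
\item You instead use $D\le 0$ to factor $4q_2=(q_1-\sqrt{-D}\,t)(q_1+\sqrt{-D}\,t)$ into two linear factors. Both factors equal $2$ at $t=0$. At $t=1$ the second is at least $3+2\sigma_1>0$ and their product is $4q_2(1)>0$, so the first is positive there too.
\end{itemize}

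Your route is valid, because all three conditions are available in the sufficiency direction. It also has the merit of reusing the key identity rather than needing a new observation. The Bernstein form buys a slightly stronger fact: the two linear conditions already force $q_2>0$ on $[0,1]$.

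In a write-up, delete the exploratory second bullet of your Step~4 (``I check the quadratic $q_2$ directly''). That sentence is not an argument; put the factorisation there, since it is what actually proves the claim.
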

\begin{proof}
    By homogeneity of the parameters, we can assume $c_0 \in \mathbb{R}^\times$. Hence, by Proposition~\ref{prop4}, the parameters $c_0$, $c_1$, and $c_2$ are real.
    Let 
    $$
        s_0 := c_0 \sgn c_0 = |c_0|,
        \qquad
        s_1 := c_1 \sgn c_0,
        \qquad
        s_2 := c_2 \sgn c_0,
    $$
    and introduce
    $$
        \mathcal{S}_1(X) := \sgn c_0 \, \mathcal{Q}_1(x),
        \qquad
        \mathcal{S}_2(X) := \sgn c_0 \, \mathcal{Q}_2(x),
        \qquad
        X := x^2 \in [0,1].
    $$
    Then, $c_0\,\mathcal{Q}_2(x) = s_0 \, \mathcal{S}_2(X)$ and
    $$
        \frac{\mathcal{Q}_1(x) - 2 \sgn c_0\sqrt{c_0\,\mathcal{Q}_2(x)}}{\mathcal{Q}_2(x)}
        =
        \frac{\mathcal{S}_1(X) - 2 \sqrt{s_0 \, \mathcal{S}_2(X)}}{\mathcal{S}_2(X)}.
    $$
    Therefore, $h(x)$ is real-valued if and only if
    \begin{align}\label{eq24}
        \mathcal{S}_2(X) > 0
        \qquad
        \text{and}
        \qquad
        \mathcal{S}_1(X) \geq 2 \sqrt{s_0 \, \mathcal{S}_2(X)},
    \end{align}
    for all $X \in [0,1]$, i.\,e.~all the radicands are non-negative (and the denominator is non-zero for $h$ to be continuous). In particular, this implies
    $$
        \mathcal{S}_1(X) = -(s_0+2s_1)X + 3s_0+2s_1 > 0
        \qquad
        \forall X \in [0,1],
    $$
    which is equivalent to
    \begin{equation}\label{eq25}
        3 s_0 + 2 s_1 > 0,
    \end{equation}
    because $\mathcal{S}_1(X)$ is linear and $\mathcal{S}_1(0)=3 s_0 + 2 s_1$ and $\mathcal{S}_1(1)=2s_0>0$.
    Next,
    $$
        \mathcal{S}_2(X) =
        (s_0+2s_1+4s_2)X^2 -
        (3s_0+6s_1+8s_2) X +
        (3 s_0 + 4 s_1 + 4 s_2),
    $$
    so $\mathcal{S}_2(0) > 0$ gives
    \begin{equation}\label{eq26}
        3 s_0 + 4 s_1 + 4 s_2 > 0.
    \end{equation}
    Finally, note that
    $$
    \mathcal{S}_1^2(X) - 4s_0\,\mathcal{S}_2(X)
    =
    (1-X)^2 \, \mathcal{D},
    $$
    where
    \begin{equation}\label{eq27}
        \mathcal{D} = -(3s_0^2-4s_1^2+4s_0s_1+16s_0s_2)
    \end{equation}
    is the   of $\mathcal{S}_2(X)$. Since $\mathcal{S}_1(X) \geq 2 \sqrt{s_0 \, \mathcal{S}_2(X)}$, we must have $\mathcal{D} \geq 0$, that is
    \begin{equation}\label{eq28}
        3s_0^2-4s_1^2+4s_0s_1+16s_0s_2 \leq 0.
    \end{equation}
    This proves the necessity of inequalities~\eqref{eq25},~\eqref{eq26}, and~\eqref{eq28}, that is, inequalities~\eqref{eq23} in the original parameters.
    Note that if $\mathcal{D}=0$, then $h(x) \equiv 0$, so for non-trivial $h(x)$, inequality~\eqref{eq28} becomes strict.

    Conversely, assume~\eqref{eq25},~\eqref{eq26}, and~\eqref{eq28}. In Bernstein form (see, for example,~\cite{Lorentz1953}),
    $$
        \mathcal{S}_2(X) = 
            s_0 X^2 + 
            (3s_0 + 2s_1) X(1-X) +
            (3s_0+4s_1+4s_2) (1-X)^2,
    $$
    where all terms are non-negative, the first coefficient $s_0$ is strictly positive by Proposition~\ref{prop3}, and the last term is strictly positive when the first one vanishes. Hence, $\mathcal{S}_2(X) > 0$ for all $X \in [0,1]$.
    Finally,
    $$
        \mathcal{S}_1^2(X) - 4s_0\,\mathcal{S}_2(X)
        =
        (1-X)^2 \, \mathcal{D} \geq 0.
    $$
    Since $\mathcal{S}_1(X) > 0$, this implies
    $$
        \mathcal{S}_1(X)
        \geq
        2\sqrt{s_0 \, \mathcal{S}_2(X)}.
    $$
    Thus both inequalities in~\eqref{eq24} are satisfied. This proves the proposition.
\end{proof}

Note that for non-trivial solutions, the inequalities given in~\eqref{eq23} must be strict. Indeed, direct substitution of
$$
    c_2 = -\frac{3c_0 + 4c_1}{4}
    \qquad
    \text{or}
    \qquad
    c_2 = -\frac{(3c_0-2c_1)(c_0+2c_1)}{16c_0}
$$
into~\eqref{eq21} together with~\ref{condition:three} and~\ref{condition:four} leads only to the trivial solution $h \equiv 0$. 

The next statement concerns condition~\ref{condition:five} for function~\eqref{eq22}.

\begin{prop}\label{prop6}
    The function $h(x)$ defined in~\eqref{eq22} satisfies
    $$
        h(x) + 1 > 0
    $$
    for all $x \in [-1,1]$ provided the conditions~\eqref{eq23} on the parameters are met.
\end{prop}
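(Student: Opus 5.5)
The plan is to prove the stronger statement $|h(x)|<1$ on $[-1,1]$, which clearly implies $h(x)+1>0$. We work in the normalised variables of the proof of Proposition~\ref{prop5}: $X=x^2\in[0,1]$, $s_0=|c_0|>0$, $\mathcal{S}_1$, $\mathcal{S}_2$, and $\mathcal{D}>0$ (strict for non-trivial $h$). In these variables
$$
h^2=\frac{X\bigl(\mathcal{S}_1-2\sqrt{s_0\mathcal{S}_2}\bigr)}{\mathcal{S}_2}.
$$
The first step is to rationalise the numerator with the identity $\mathcal{S}_1^2-4s_0\mathcal{S}_2=(1-X)^2\mathcal{D}$ from Proposition~\ref{prop5}:
$$
h^2=\frac{X(1-X)^2\,\mathcal{D}}{\mathcal{S}_2\bigl(\mathcal{S}_1+2\sqrt{s_0\mathcal{S}_2}\bigr)} .
$$
Every factor here is nonnegative, and the denominator is strictly positive under~\eqref{eq23}. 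This formula also shows directly that $h^2$ vanishes at $X=0$ and $X=1$.

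Set $u:=\sqrt{\mathcal{S}_2}>0$. The inequality $h^2<1$ is then equivalent to
$$
\Psi(X):=u^2+2X\sqrt{s_0}\,u-X\mathcal{S}_1>0 .
$$
Viewed as a quadratic in $u$, $\Psi>0$ holds exactly when $u>\sqrt{X^2s_0+X\mathcal{S}_1}-X\sqrt{s_0}$. I would argue by continuity rather than by a direct estimate. On the connected set $[0,1]$, $h^2$ is continuous and $h^2=0<1$ at the endpoints. So if $h^2<1$ failed somewhere, there would be a point $X_0\in(0,1)$ with $h^2(X_0)=1$, i.e.\ $\Psi(X_0)=0$. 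Isolating the radical as $2X\sqrt{s_0}\,u=X\mathcal{S}_1-\mathcal{S}_2$ and squaring gives the polynomial condition
$$
\Phi(X):=\mathcal{S}_2^2-2X\mathcal{S}_1\mathcal{S}_2+X^2(1-X)^2\mathcal{D}=0 .
$$
This is just~\eqref{eq21} evaluated at $h^2=1$, in the normalised variables. It therefore suffices to show $\Phi(X)\ne0$ on $(0,1)$, or at least that $\Phi$ has no zero at which $X\mathcal{S}_1-\mathcal{S}_2\ge0$, which is the sign condition needed for the squared equation to come from the minus branch.

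The main obstacle is this sign analysis of $\Phi$. My first attempt would be to expand $\Phi$ in the Bernstein basis of degree~$4$ on $[0,1]$, as was done for $\mathcal{S}_2$ in Proposition~\ref{prop5}. I would substitute $\mathcal{D}=\mathcal{S}_1^2/(1-X)^2\cdot$-type relations, or simply eliminate $s_2$ in favour of $\mathcal{D}$, and then check that every Bernstein coefficient is positive, using $s_0>0$, $3s_0+2s_1>0$, $3s_0+4s_1+4s_2>0$ and $\mathcal{D}>0$. If a coefficient turns out not to be sign-definite, the fallback is a homotopy argument in the parameters. The admissible region~\eqref{eq23} is connected, $h$ depends continuously on $(c_0,c_1,c_2)$, and on the boundary $\mathcal{D}=0$ we have $h\equiv0$. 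Hence if $\min(h+1)\le0$ for some admissible parameters, there would be admissible parameters and an $X_0$ with $h(X_0)=-1$, i.e.\ $\Psi(X_0)=0$. It then remains only to rule out simultaneous solutions of $\Psi=0$ and $\partial_X\Psi=0$, a resultant computation in $X$, which should contradict~\eqref{eq23}.
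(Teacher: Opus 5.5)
Your reduction is correct and lands exactly on the paper's inequality: $\Psi(X)=\mathcal{S}_2-X\mathcal{S}_1+2X\sqrt{s_0\mathcal{S}_2}>0$ is precisely (29). But there is a genuine gap: the proposal never proves $\Psi>0$. You defer it to a ``first attempt'' and a ``fallback'', and the first attempt cannot work.

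The problem is that squaring mixes the two roots of (21) in $h^2$. Your quartic $\Phi$ also vanishes wherever the other root $X(\mathcal{S}_1+2\sqrt{s_0\mathcal{S}_2})/\mathcal{S}_2$ equals $1$. That root is continuous on $[0,1]$ and runs from $0$ at $X=0$ to $(2s_0+2s_0)/s_0=4$ at $X=1$, so it must cross $1$. Concretely, $\Phi(0)=C_2^2>0$ while $\Phi(1)=s_0^2-4s_0^2=-3s_0^2<0$. Hence no degree-$4$ Bernstein expansion of $\Phi$ has all coefficients positive, since the last coefficient is $\Phi(1)$.

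Keeping the sign condition $X\mathcal{S}_1-\mathcal{S}_2\ge0$ does exclude these spurious roots. However, that just returns you to analysing $\Psi$ itself. The homotopy fallback is only a sketch. The tangency resultant of the squared equation would again pick up tangencies of the other branch, and the claim that it ``should contradict (23)'' is unverified.

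The missing idea is a one-line estimate you were already close to. Apply the Bernstein form to $\mathcal{S}_1$ and $\mathcal{S}_2$ inside $\Psi$, rather than to $\Phi$. With $C_1=3s_0+2s_1>0$ and $C_2=3s_0+4s_1+4s_2>0$, one has $\mathcal{S}_1=2s_0X+C_1(1-X)$ and $\mathcal{S}_2=s_0X^2+C_1X(1-X)+C_2(1-X)^2$. Therefore $\mathcal{S}_2-X\mathcal{S}_1=C_2(1-X)^2-s_0X^2$. Moreover, $\mathcal{S}_2\ge s_0X^2$ gives $2X\sqrt{s_0\mathcal{S}_2}\ge 2s_0X^2$. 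Combining these, $\Psi(X)\ge C_2(1-X)^2+s_0X^2>0$ on $[0,1]$, which is the paper's argument. No continuity, squaring, or resultant computation is needed.
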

\begin{proof}
    Since $h(x)$ is an odd function, we need to prove that $|h(x)|<1$ for $x \in [-1,1]$. Equivalently,
    $$
        |h(x)|^2 = x^2 \left(
            \frac{\mathcal{Q}_1(x) - 2 \sgn c_0 \sqrt{c_0\,\mathcal{Q}_2(x)}}{\mathcal{Q}_2(x)}
        \right)
        < 1
    $$
    or, using the notation from the proof of Proposition~\ref{prop5},
    \begin{equation}\label{eq29}
        \mathcal{S}_2(X) 
        - X\mathcal{S}_1(X) 
        + 2 X \sqrt{s_0\,\mathcal{S}_2(X)}
        > 0.
    \end{equation}
    Let
    $$
        C_1 := 3 s_0 + 2 s_1
        \qquad
        \text{and}
        \qquad
        C_2 := 3s_0+4s_1+4s_2.
    $$
    By Proposition~\ref{prop5}, both $C_1$ and $C_2$ are positive.
    In the Bernstein basis,
    \begin{align*}
        \mathcal{S}_1(X) &= 2 s_0 X + C_1 (1-X),
        \\
        \mathcal{S}_2(X) &= 
            s_0 X^2 + 
            C_1 X(1-X) +
            C_2 (1-X)^2.
    \end{align*}
    Clearly, for $C_1 > 0$, $C_2 > 0$, and $X\in[0,1]$, 
    \begin{equation}\label{eq30}
        \mathcal{S}_2(X) \geq s_0 X^2
        \quad
        \Rightarrow
        \quad
        \sqrt{s_0\,\mathcal{S}_2(X)} \geq s_0 X.
    \end{equation}
    In the Bernstein basis,~\eqref{eq29} becomes
    $$ 
        C_2 (1-X)^2
        - s_0 X^2 
        + 2 X \sqrt{s_0\,\mathcal{S}_2(X)}
        > 0.
    $$
    Finally, by~\eqref{eq30},
    $$
        C_2 (1-X)^2
        - s_0 X^2 
        + 2 X \sqrt{s_0\,\mathcal{S}_2(X)}
        \geq
        C_2 (1-X)^2 
        + s_0 X^2 > 0,
    $$
    which implies $|h(x)| < 1$ and proves the proposition.
\end{proof}

To prove that function~\eqref{eq22} is indeed a solution leading to Zoll metrics, we only need to check that it satisfies condition~\ref{condition:two}.

\begin{prop}\label{prop7}
    The function $h(x)$ defined in~\eqref{eq22} is smooth, namely
    $$
        h \in C^\infty[-1,1]
    $$
    for parameter values given in~\eqref{eq23}.
\end{prop}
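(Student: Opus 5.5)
The plan is to rewrite \eqref{eq22} so that every square root is taken of a quantity that stays strictly positive on $[-1,1]$. Then smoothness follows because compositions and products of smooth functions are smooth. As in the proof of Proposition~\ref{prop5}, I will use $X=x^2$ and the normalized quantities $s_0=|c_0|>0$, $\mathcal{S}_1$, $\mathcal{S}_2$ and $\mathcal{D}$. With this notation,
$$
h(x)=\delta_0\,x\sqrt{\frac{\mathcal{S}_1(X)-2\sqrt{s_0\,\mathcal{S}_2(X)}}{\mathcal{S}_2(X)}} .
$$
Since $X=x^2$ is a polynomial in $x$, it suffices to show that the radical, viewed as a function of $X$, is of the form $(1-X)\,\Phi(X)$ with $\Phi\in C^\infty$ on a neighbourhood of $[0,1]$.

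The first step takes care of the inner root. By the Bernstein-form argument in the proof of Proposition~\ref{prop5}, $\mathcal{S}_2(X)>0$ on $[0,1]$. By continuity this also holds on a slightly larger open interval, so $\sqrt{s_0\mathcal{S}_2(X)}$ is smooth and positive there. Moreover, by \eqref{eq25} together with $\mathcal{S}_1(1)=2s_0>0$, the linear function $\mathcal{S}_1$ is positive on $[0,1]$. Hence the quantity
$$
\mathcal{T}(X):=\mathcal{S}_1(X)+2\sqrt{s_0\,\mathcal{S}_2(X)}
$$
is smooth and strictly positive on a neighbourhood of $[0,1]$.

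The second step deals with the outer root, and this is the main obstacle. The numerator $\mathcal{N}(X):=\mathcal{S}_1(X)-2\sqrt{s_0\mathcal{S}_2(X)}$ vanishes at $X=1$, because $\mathcal{S}_1(1)=2s_0$ and $\mathcal{S}_2(1)=s_0$. So $\sqrt{\mathcal{N}}$ is a priori only continuous at the poles $x=\pm1$. To resolve this I rationalize the numerator using the identity $\mathcal{S}_1^2-4s_0\mathcal{S}_2=(1-X)^2\mathcal{D}$ from the proof of Proposition~\ref{prop5}:
$$
\mathcal{N}(X)=\frac{\mathcal{S}_1^2(X)-4s_0\mathcal{S}_2(X)}{\mathcal{T}(X)}=\frac{(1-X)^2\,\mathcal{D}}{\mathcal{T}(X)} .
$$
The constant $\mathcal{D}$ is nonnegative by \eqref{eq28}, and $1-X=1-x^2\ge 0$ on $[-1,1]$. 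Therefore
$$
h(x)=\delta_0\,\sqrt{\mathcal{D}}\;\frac{x\,(1-x^2)}{\sqrt{\mathcal{S}_2(x^2)\,\mathcal{T}(x^2)}} ,
$$
and the choice of the non-negative root is consistent with the original formula on the whole interval.

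In this form $h$ is a polynomial multiplied by the reciprocal square root of a smooth function that is strictly positive on a neighbourhood of $[-1,1]$. Hence $h\in C^\infty[-1,1]$, and in fact $h$ is real-analytic there. As by-products, this expression makes the oddness of $h$ and the boundary values $h(\pm1)=0$ manifest, and it shows again that $\mathcal{D}=0$ gives $h\equiv0$.

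The only delicate point is checking that the factorization identity and the positivity of $\mathcal{T}$ hold uniformly up to the endpoint $X=1$, and that the positivity extends slightly beyond $[0,1]$ so that derivatives at $x=\pm1$ are defined. Both follow from the strict inequalities \eqref{eq25}–\eqref{eq26} and from $s_0>0$, which holds by Proposition~\ref{prop3}.
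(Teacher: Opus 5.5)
Your proof is correct and follows essentially the same route as the paper. Both rationalize the outer radicand via $\mathcal{S}_1^2-4s_0\mathcal{S}_2=(1-X)^2\mathcal{D}$ to obtain $h(x)=\delta_0\sqrt{\mathcal{D}}\,x(1-x^2)/\sqrt{\mathcal{S}_2\,\mathcal{T}}$, whose denominator is strictly positive and smooth; your extra remarks (positivity on a neighbourhood of $[0,1]$ for the endpoint derivatives, and consistency of the root choice since $1-x^2\ge 0$) only make explicit details the paper leaves implicit.
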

\begin{proof}
    Multiplying the numerator and denominator of~\eqref{eq22} by 
    $$
        \sqrt{\mathcal{Q}_1 + 2 \sgn c_0\sqrt{c_0\,\mathcal{Q}_2}},
    $$    
    one can write $h(x)$ as
    $$
        h(x) = \delta_0\sqrt{
            \mathcal{D}
            } 
            \frac{x(1-x^2)}{
            \sqrt{
            \mathcal{Q}_2(\mathcal{Q}_1 + 2 \sgn c_0\sqrt{c_0\,\mathcal{Q}_2})
            }},
    $$
    where $\mathcal{D} \geq 0$ is defined by~\eqref{eq27}. Now, the numerator is a polynomial, and, in the notation from the proof of Proposition~\ref{prop5}, the new denominator is
    $$
        \sqrt{
            \mathcal{S}_2(X)
            \left(
            \mathcal{S}_1(X) 
            + 2 \sqrt{s_0\,\mathcal{S}_2(X)}
            \right)
        },
    $$
    which is a strictly positive smooth function on $X \in [0,1]$. Hence, $h$ is infinitely differentiable as a composition of $C^\infty$ functions. 
    
    This proves the proposition.
\end{proof}

Hence, we have shown that $h(x)$ defined in~\eqref{eq22} satisfies necessary and sufficient conditions~\ref{condition:one}--\ref{condition:five} for the parameter values $\delta_0 = \pm 1$ and~\eqref{eq23}. Consequently, according to~\cite[Corollary~4.16]{Besse1978}, the metric~\eqref{eq1} extends smoothly to $\mathbb{S}^2$ and all its geodesics are closed. Finally, 
by changing the sign of $x$, we can set $\delta_0=+1$, and by rescaling the parameters $c_0, \dots, c_3$, we can set $c_0=1$, which leads to the final solution~\eqref{eq6} with two essential parameters $\sigma_1 := c_1/c_0$ and $\sigma_2 := c_2/c_0$.

Let us note that $\mathcal{Q}_i$'s defined in~\eqref{eq20} and $Q_i$'s used in Theorem~\ref{t1} are related as
$$
    c_0\,Q_1(x) = \mathcal{Q}_1(x)
    \qquad
    \text{and}
    \qquad
    c_0\,Q_2(x) = \mathcal{Q}_2(x)
$$
after the change of parameters $(c_1, c_2) \to (c_0\,\sigma_1, c_0\,\sigma_2)$ in $\mathcal{Q}_i$'s.
Moreover, substituting~\eqref{eq19} into the polynomials $R_1$ and $R_2$ defined in~\eqref{eq17} leads to the following relations:
$$
    R_1(x) = \mathcal{Q}_2(x) + x^2 \mathcal{Q}_1(x),
    \qquad
    R_2(x) = -x^2 \mathcal{Q}_2(x).
$$
The first integrals $F_+$ and $F_-$ are complex-valued functions. To obtain real cubic integrals, we substitute~\eqref{eq16} with $\delta_1 = \sgn x$ and polynomials $R_1$ and $R_2$ as above into the definitions~\eqref{eq10} and~\eqref{eq11}, and then define
$$
    \mathcal{F}_1 := - \sqrt{2} (F_+ + F_-)
    \qquad
    \text{and}
    \qquad
    \mathcal{F}_2 := i \sqrt{2} (F_+ - F_-).
$$
These functions are real-valued and cubic in $(p_x,p_y)$. More explicitly,
\begin{align*}
    \mathcal{F}_1 &= 
    \sqrt{1-x^2} \left[
    \cos y \left(
        \mathcal{A}_0 p_x^{3}
        + 
        \mathcal{A}_2 p_x p_y^2
    \right)
    + \sin y \left(
        \mathcal{A}_1 p_x^{2} p_y
        + 
        \mathcal{A}_3 p_y^3
    \right)
    \right],
    \\
    \mathcal{F}_2 &= 
    \sqrt{1-x^2} \left[
    \sin y \left(
        \mathcal{A}_0 p_x^{3}
        + 
        \mathcal{A}_2 p_x p_y^2
    \right)
    - \cos y \left(
        \mathcal{A}_1 p_x^{2} p_y
        + 
        \mathcal{A}_3 p_y^3
    \right)
    \right],
\end{align*}
where
\begin{align*}
    \mathcal{A}_0(x) &:= \frac{1-x^2}{(h+1)^3},
    &
    \mathcal{A}_1(x) &:= \frac{
        x^2 Q_1
        - Q_2 h(h+2) 
    }{
        2x(h+1)^2\sqrt{Q_2}
    },
    \\
    \mathcal{A}_2(x) &:= \frac{\sqrt{Q_2}}{1-x^2},
    &
    \mathcal{A}_3(x) &:= \frac{x \sqrt{Q_2}}{(1-x^2)^2}.
 \end{align*}
By direct substitution, one can verify that $\{H,\mathcal{F}_1\}=\{H,\mathcal{F}_2\}=0$. 

\begin{prop}\label{prop8}
    The first integrals $\mathcal{F}_1$ and $\mathcal{F}_2$ are regular on the sphere $\mathbb{S}^2$.
\end{prop}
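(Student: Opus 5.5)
The plan is to check regularity in two places where the coordinate expressions of $\mathcal{F}_1,\mathcal{F}_2$ look singular: the equator $x=0$, where $\mathcal{A}_1$ has a factor $x$ in the denominator, and the poles $x=\pm1$, where the chart $(x,y)$ degenerates and $\mathcal{A}_2,\mathcal{A}_3$ blow up. Away from these sets, $\mathcal{F}_1,\mathcal{F}_2$ are polynomial in $(p_x,p_y)$ with coefficients smooth in $(x,y)$. This uses $Q_2>0$ and $h+1>0$ on $[-1,1]$, which come from the proofs of Propositions~\ref{prop5} and~\ref{prop6}, together with the smoothness of $h$ from Proposition~\ref{prop7}. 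By the symmetry $x\mapsto -x$ it suffices to treat the pole $x=1$.

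\emph{Equator.} By the proof of Proposition~\ref{prop7}, $h(x)=x\,k(x)$ with $k$ smooth on $[-1,1]$. Substituting this into the numerator of $\mathcal{A}_1$ gives
$$
x^2Q_1-Q_2h(h+2)=x\bigl(xQ_1-Q_2\,k\,(xk+2)\bigr).
$$
The factor $x$ cancels, so $\mathcal{A}_1=\bigl(xQ_1-Q_2k(xk+2)\bigr)/\bigl(2(h+1)^2\sqrt{Q_2}\bigr)$ is smooth on $[-1,1]$. (The choice $\delta_1=\sgn x$ in~\eqref{eq16} was made precisely to remove the sign singularity, and this computation confirms that it does.)

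\emph{Poles.} I would use the smooth structure provided by Besse's Corollary~4.16. Set $x=\sin r$ and $\rho(r)=\int_r^{\pi/2}(1+h(\sin t))\,dt$. Then $g=d\rho^2+\cos^2 r\,dy^2$, and $\xi=\rho\cos y$, $\eta=\rho\sin y$ are smooth coordinates near the north pole, with $\cos r=\rho\,\phi(\rho^2)$ for some smooth $\phi$ with $\phi(0)=1$. For the momenta,
$$
p_x=\frac{p_r}{\cos r}=-\frac{(1+h)\,p_\rho}{\cos r},\qquad p_\rho=\frac{\xi p_\xi+\eta p_\eta}{\rho},\qquad p_y=\xi p_\eta-\eta p_\xi .
$$
To decide smoothness I would use the standard criterion for polar coordinates. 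Write a function on $T^*$ that is polynomial in $(p_\xi,p_\eta)$ in terms of $\rho$, $y$ and the polar momentum components $(\rho p_\rho,\;p_y)$, expanded in Fourier modes $e^{iky}$. It is smooth if and only if each mode coefficient, as a function of $\rho$, equals $\rho^{|k|}$ times a smooth even function of $\rho$, once the momentum monomials are accounted for. Practically, I would rewrite
$$
\mathcal{F}_1+i\mathcal{F}_2\;\propto\; e^{-iy}\bigl(\dots\bigr)
$$
in terms of the smooth complex quantities $\zeta=\xi+i\eta$ and $p_\zeta$, and show that it is a polynomial in $\zeta,\bar\zeta,p_\zeta,p_{\bar\zeta}$ with coefficients that are smooth functions of $\rho^2$.

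\emph{Expected obstacle.} The hard step is the pole computation. Individually, the terms $\sqrt{1-x^2}\,\cos y\,\mathcal{A}_0p_x^3$ and $\sqrt{1-x^2}\,\mathcal{A}_2\,p_xp_y^2\cos y$ carry negative powers of $\rho$: $1-x^2=\cos^2r\sim\rho^2$, while $p_x\sim p_\rho/\rho$. Regularity requires these to combine, together with the $\sin y$ terms, into the smooth invariant combinations. I would first treat the round case $h\equiv0$ as a template. There $\mathcal{F}_{1,2}$ should reduce to cubic expressions in the Euclidean angular momenta $J_1,J_2,L$, which are manifestly smooth. Next I would expand $1+h$, $Q_2$, $\sqrt{Q_2}$ and $\mathcal{A}_1$ in powers of $1-x^2=\rho^2\phi^2$ near $x=1$, using $h(1)=0$ and $Q_2(1)=1$. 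The aim is to show that the deviation from the round case only multiplies the round-sphere building blocks by smooth functions of $\rho^2$, plus terms carrying enough extra powers of $\rho$. The critical check is the order-$\rho^{-1}$ and order-$\rho^{0}$ terms in the $e^{\pm iy}$ modes. Their cancellation should follow from $Q_2(1)=1$ and from the relation $\mathcal{A}_1(1)=\mathcal{A}_3\,(1-x^2)^2\big|_{x=1}=1$ (up to the normalisation in $h$). I would verify these identities directly from the explicit formulas.
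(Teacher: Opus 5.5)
Your equator argument is the paper's: write $h=xE$ with $E$ smooth and cancel the factor $x$ in $\mathcal{A}_1$. The poles, however, are the real content of the proposition, and there you give only a plan. The identities you expect to drive the cancellation are also not sufficient.

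Work in the chart $X=\sqrt{1-x^2}\cos y$, $Y=\sqrt{1-x^2}\sin y$, which is a smooth chart for this metric because $h(\pm1)=0$. Put $R=X^2+Y^2=1-x^2$, $u=Xp_X+Yp_Y$ and $v=Xp_Y-Yp_X$, so that $p_x=-xu/R$ and $p_y=v$. Regrouping gives
\[
\mathcal{F}_1=-x\sqrt{Q_2}\,p_X(p_X^2+p_Y^2)-\frac{x\mathcal{B}_1}{R}\,p_Xu^2+\frac{x\mathcal{B}_2}{R^2}\,Yu^2v,
\qquad
\mathcal{B}_1=\frac{x^2}{(h+1)^3}-\sqrt{Q_2},
\quad
\mathcal{B}_2=x\mathcal{A}_1-\frac{x^2}{(h+1)^3},
\]
and $\mathcal{F}_2$ decomposes analogously. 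The function $Yu^2v/R$ is homogeneous of degree two in $(X,Y)$ but is not a polynomial; its $p_X^3$-coefficient is $-X^2Y^2/R$. So smoothness forces $\mathcal{B}_2=O(R^2)$: both $\mathcal{B}_2$ and its $R$-derivative must vanish at the pole. Your identities $Q_2(1)=1$ and $\mathcal{A}_1(1)=1$ give only $\mathcal{B}_1(0)=0$ and $\mathcal{B}_2(0)=0$. The first-order condition involves $\mathcal{A}_1'$, $h'$ and $Q_2'$ at the pole, and it does not follow from these pointwise values.

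The missing idea is to use the algebraic relation $Q_2h^2=x^2(Q_1-2\sqrt{Q_2})$ that defines $h$. It simplifies $\mathcal{A}_1$ to $(x^2-h\sqrt{Q_2})/(x(h+1)^2)$, and hence
\[
\mathcal{B}_2=\frac{h\,\bigl(x^2-(h+1)\sqrt{Q_2}\bigr)}{(h+1)^3}.
\]
This is a product of two smooth functions of $R$, and each factor vanishes at $R=0$, by $h(\pm1)=0$ and $Q_2(\pm1)=1$ respectively. That gives $\mathcal{B}_2=O(R^2)$. Together with $\mathcal{B}_1=O(R)$, this is exactly the paper's pole argument.

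Separately, your Fourier criterion cannot be used as stated. With momentum variables $(\rho p_\rho,p_y)$, even the smooth function $p_\xi=(\cos y\,\rho p_\rho-\sin y\,p_y)/\rho$ violates the rule. You would have to pass to $\zeta p_\zeta$ and $\bar{\zeta}p_{\bar{\zeta}}$ with shifted mode indices. The Cartesian chart above avoids this. It also avoids the extra reparametrisation $1-x^2=\rho^2\phi(\rho^2)^2$ that geodesic polar coordinates introduce.
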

\begin{proof}
    By Proposition~\ref{prop7}, $h \in C^\infty[-1,1]$.
    By Proposition~\ref{prop6}, $h+1>0$ for all $x \in [-1,1]$, so the functions $\mathcal{A}_0$, $\mathcal{A}_2$, and $\mathcal{A}_3$ are regular for $x \in (-1,1)$. Recall that
    $$
        h(x) = x E(x),
        \qquad
        E(x) := \sqrt{
            \mathcal{D}
            } 
            \frac{1-x^2}{
            \sqrt{
            Q_2(Q_1 + 2 \sqrt{Q_2})
            }},
    $$
    where $E$ is the even part of the function $h$. After this substitution into $\mathcal{A}_1$, the apparent singularity at $x=0$ cancels, so
\begin{align*}
    \mathcal{A}_1 &= \frac{
        x \, Q_1
        - Q_2 E(h+2) 
    }{
        2(h+1)^2\sqrt{Q_2}
    },
\end{align*}
is also regular for $x \in (-1,1)$. Thus, it remains to check the coordinate singularities at $x=\pm 1$. In local coordinates
$$
    X = \sqrt{1-x^2} \cos y,
    \qquad
    Y = \sqrt{1-x^2} \sin y,
$$
the cubic integrals are
\begin{align*}
    \mathcal{F}_1 &= 
    X \left(
        \mathcal{A}_0 p_x^{3}
        + 
        \mathcal{A}_2 p_x p_y^2
    \right)
    + Y \left(
        \mathcal{A}_1 p_x^{2} p_y
        + 
        \mathcal{A}_3 p_y^3
    \right),
    \\
    \mathcal{F}_2 &=
    Y \left(
        \mathcal{A}_0 p_x^{3}
        + 
        \mathcal{A}_2 p_x p_y^2
    \right)
    - X \left(
        \mathcal{A}_1 p_x^{2} p_y
        + 
        \mathcal{A}_3 p_y^3
    \right).
\end{align*}
The new canonical momenta are
$$
    p_x = - \frac{x}{X^2+Y^2}(X p_X + Y p_Y),
    \qquad
    p_y = X p_Y - Y p_X,
$$
and substituting this, one obtains
\begin{align*}
    \mathcal{F}_1 &= 
    - x \sqrt{Q_2} p_X (p_X^2 + p_Y^2) 
    - \frac{x \, \mathcal{B}_1}{R} p_X (X p_X + Y p_Y)^2
    \\
    &+
    \frac{x \, \mathcal{B}_2}{R^2} Y (X p_X + Y p_Y)^2 (X p_Y - Y p_X),
    \\
    \mathcal{F}_2 &= - x \sqrt{Q_2} p_Y (p_X^2 + p_Y^2)
    - \frac{x \, \mathcal{B}_1}{R} p_Y (X p_X + Y p_Y)^2
    \\
    &-
    \frac{x \, \mathcal{B}_2}{R^2} X (X p_X + Y p_Y)^2 (X p_Y - Y p_X),
\end{align*}
where
$$
    \mathcal{B}_1 := 
        \frac{x^2}{(h+1)^3} - \sqrt{Q_2},
    \quad
    \mathcal{B}_2 :=
        \frac{x^2-h\sqrt{Q_2}}{(h+1)^2} - \frac{x^2}{(h+1)^3},
    \quad
    R := X^2+Y^2.
$$
Since at $x = \pm 1$, one has
$$
    \mathcal{B}_1 = O(R),
    \qquad
    \mathcal{B}_2 = O(R^2),
$$
both integrals extend smoothly to the poles. This proves the proposition.
\end{proof}

Let us recall that for a compact Liouville surface with Gaussian curvature not positive constant, any two non-trivial quadratic first integrals are dependent (modulo the energy); see~\cite{Kiyohara1991}. The curvature for the metric $g$ is
$$
    K(x) = \frac{h(x) + 1 - x h'(x)}{(h(x)+1)^3},
$$
which is non-constant for $h \not\equiv 0$.
Thus, the only possible integrals of degrees lower than $3$ are $H$ and $L$. Therefore, the integrals $\mathcal{F}_1$ and $\mathcal{F}_2$ can be trivial only if they are functionally dependent on $H$ and $L$. However, both $H$ and $L$ do not depend on the angle $y$, while both $\mathcal{F}_1$ and $\mathcal{F}_2$ do. Thus, neither $\mathcal{F}_1$ nor $\mathcal{F}_2$ are functionally dependent on lower-degree integrals and therefore are non-trivial.

This is the only family of solutions for $\alpha \neq 0$ that admits a non-trivial integral cubic in the momenta. To complete the proof of Theorem~\ref{t1}, we now need to consider the case where the operator $\mathcal{L}$ does not have non-zero eigenvalues.

\subsection{The Case of Zero Eigenvalues}\label{sec22}

We now show that if the map $\mathcal{L}$ admits only the eigenvalue~$0$, then the metric $g$ cannot be a rotationally symmetric Zoll metric on $\mathbb{S}^2$. 

\begin{prop}\label{prop9}

If the map $\mathcal{L}$ admits only the eigenvalue~$0$, then the function $h(x)$ in \eqref{eq1} cannot be odd.
    
\end{prop}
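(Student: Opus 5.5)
If the spectrum of $\mathcal{L}$ is $\{0\}$, I will exploit the finite-dimensionality of the space of cubic integrals to find a non-trivial cubic integral $F$ in the generalised kernel of $\mathcal{L}$. I will take one with $\mathcal{L}^2 F=0$. Either $\mathcal{L}F=0$, or $\mathcal{L}F$ is a non-zero eigenvector with eigenvalue $0$. Solving $\{F,L\}=\partial_y F = G$ with $\partial_y G=0$ shows that the coefficients $A_k(x,y)$ in~\eqref{eq4} are affine in $y$: $A_k = a_k(x)+y\,b_k(x)$. Since $y\in\mathbb{S}^1$ is periodic and $F$ must be globally defined, the $b_k$ vanish. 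Hence $F$ is independent of $y$, i.e.\ $\mathcal{L}F=0$.

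The case $\mathcal{L}F=0$ then reduces to $F=\sum a_k(x)p_x^{3-k}p_y^k$, and I will write out $\{H,F\}=0$ monomial by monomial. I will subtract suitable multiples of $HL$ and $L^3$, and split $F$ into its parts odd and even in $p_x$. The odd part is $p_x(a_0p_x^2+a_2p_y^2)$ and the even part is $p_y(a_1p_x^2+a_3p_y^2)$; by the symmetry $p_x\to -p_x$ of $H$, each part is separately an integral. The even part is $L$ times a quadratic integral, which is trivial by the argument of Proposition~\ref{prop3}. So I will assume $F=a_0p_x^3+a_2p_xp_y^2$. The monomials $p_x^4$, $p_x^2p_y^2$, $p_y^4$ then give three linear ODEs for $a_0$ and $a_2$ involving $h$. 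The $p_y^4$ equation forces $a_2\,x/(1-x^2)^2=0$, i.e.\ $a_2=0$. The $p_x^2p_y^2$ equation then gives $a_0\cdot x/(1-x^2)^2=0$ up to the metric factor, so $a_0=0$ as well. I will recheck these coefficients carefully, since the precise form matters. In short, a $y$-independent cubic integral is a combination of $HL$ and $L^3$ and is trivial.

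What remains is to see where oddness of $h$ enters. The conclusion I will aim for is that any surviving $y$-independent solution forces $(h+1)^2$ to be a specific function of the form $\kappa/(1-x^2)$-type, or $h$ constant non-zero. Such a function violates $h(0)=0$ together with $h(-x)=-h(x)$ (conditions~\ref{condition:three} and~\ref{condition:four}), whereas a genuine odd $h$ would give a trivial $F$, contradicting the assumption of a non-trivial integral.

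The main obstacle is the first step: making the periodicity argument rigorous while allowing complex coefficients and Jordan blocks of size larger than $2$. I would handle this by showing inductively that $\mathcal{L}^m F=0$ implies polynomial dependence on $y$ of degree $m-1$, and then invoking periodicity. If that fails because of a local versus global issue, the fallback is to integrate the ODE system for the Jordan-chain coefficients directly and read off the obstruction at $x=0$ from oddness.
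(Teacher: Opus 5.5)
Your argument is correct, but it takes a genuinely different route from the paper's.

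The paper works locally. It starts from $\{F,L\}=b_0L^3+b_1HL$, keeps the resulting $y$-linear terms in $A_1$ and $A_3$, and integrates the ODE system produced by $\{H,F\}=0$. It finds that $h+1$ is $x$ times an even function, which is incompatible with $h$ being odd.

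You instead use one extra input, which is legitimate for Theorem~\ref{t1} because the integral lives on $T^*\mathbb{S}^2$: $F$ is $2\pi$-periodic in $y$. Since $\mathcal{L}=\partial_y$ and the hypothesis makes $\mathcal{L}$ nilpotent on the finite-dimensional space, $\partial_y^mF=0$ makes $F$ a periodic polynomial in $y$, hence $y$-independent. So $\mathcal{L}\equiv 0$. No induction over Jordan blocks and no special care for complex coefficients is needed; the worries in your last paragraph are already settled.

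The rest of your computation checks out:
\begin{itemize}
\item Your parity split is valid precisely because $F$ and $H$ are both $y$-independent.
\item The $p_y^4$ coefficient is exactly $x a_2/(1-x^2)^2$, and once $a_2=0$ the $p_x^2p_y^2$ coefficient is exactly $3x a_0/(1-x^2)^2$. No metric factor appears, so the odd part vanishes.
\item The even part is $L$ times a $y$-independent quadratic integral $a_1p_x^2+a_3p_y^2$. A two-line computation shows this equals $2kH+mL^2$.
\end{itemize}

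Consequently, in the nilpotent case there is no non-trivial cubic integral for \emph{any} $h$. Oddness never enters. The proposition, which must be read under the tacit assumption that a non-trivial cubic integral exists, holds vacuously. Your third paragraph, which hunts for ``surviving'' solutions that violate $h(0)=0$, describes solutions that do not exist and should be dropped.

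What each route buys:
\begin{itemize}
\item Yours is shorter. It also proves something the paper takes for granted in its first line: that $\ker\mathcal{L}$ is spanned by $HL$ and $L^3$. The paper's explicit formula for $h$ does not cover $b_0=b_1=0$, which is exactly your case.
\item The paper's computation needs no periodicity in $y$ and no global definedness of $F$. It also exhibits the local metrics that do carry a non-trivial Jordan chain, showing concretely that it is the oddness of $h$ that rules them out.
\end{itemize}
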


\begin{proof}
    Consider $F$ of the form~\eqref{eq4}. Since~$0$ is an eigenvalue, there exist constants $b_0$ and $b_1$ such that 
    \begin{equation}\label{eqZE1}
        \{ F, L \}= b_0 \, L^3 + b_1 \, HL.
    \end{equation} 

    Directly substituting ~\eqref{eq2}, \eqref{eq3}, and~\eqref{eq4} into $\{ F, L \} - b_0 \, L^3 - b_1 \, HL\equiv 0$ makes its left-hand side a homogeneous polynomial of degree $3$ in $(p_x, p_y)$. Its monomials have to vanish simultaneously, which gives us a system of four PDEs in the unknowns $A_i(x,y)$ for $i=0,\dots, 3$.
    It turns out to be a first order linear ODE system in the variable $y$, hence we can solve it and we get the following simplified expressions:
    \begin{align*}
        A_0(x,y) & = a_0(x) \\
        A_1(x,y) & = a_1(x)+\frac{b_1 \left(x^2-1\right)}{2 (h(x)+1)^2}y\\
        A_2(x,y) & = a_2(x) \\
        A_3(x,y) & = a_3(x)+\frac{-2 \,b_0 \,x^2+b_1+2 b_0}{2 \left(x^2-1\right)}y.
    \end{align*}

    We substitute these expressions in the cubic first integral $F$ and consider the condition $\{H,F\}\equiv0$. Its left-hand side is a homogeneous polynomial of degree 4 in momenta and we consider again the system of equations generated by equating its monomials to zero. The variable $y$ cancels and the coefficient of $p_y^4$ gives an algebraic linear expression in the variable $a_2(x)$, from which we get 
    $$ a_2(x)= -\frac{b_1+2 b_0 \left(1-x^2\right)}{2 x}.$$
    We substitute $a_2(x)$ in the remaining equations and get a system of four first order ODEs that are separable in the four unknown functions $a_0(x), a_1(x),$ $a_3(x), h(x)$.
    \begin{align*}
    & 2 x a_1(x) (h(x)+1)^2+\left(x^2-1\right)^3 a_3'(x)=0\\[8pt]
    & 6 x^3 a_0(x) (h(x)+1)^3+2 b_0 \left(x^2-1\right)^3 \left(x \left(x^2-1\right) h'(x)+h(x)+1\right)+\\
    & +b_1 \left(x^2-1\right)^2 \left(\left(3 x^2-1\right) (h(x)+1)-x \left(x^2-1\right) h'(x)\right)=0\\[8pt]
    &2 a_1(x) \left(\left(x^2-1\right) h'(x)-x (h(x)+1)\right)+\left(x^2-1\right) (h(x)+1) a_1'(x)=0\\[8pt]
    &3 a_0(x) \left(\left(x^2-1\right) h'(x)-x (h(x)+1)\right)+\left(x^2-1\right) (h(x)+1) a_0'(x)=0
    \end{align*}

    We can solve it and get: 
    \begin{align*}
        a_0(x) & = \frac{b_2 \left(1-x^2\right)^{3/2}}{(h(x)+1)^3}, \qquad a_1(x)  = \frac{b_3 \left(1-x^2\right)}{(h(x)+1)^2}, \qquad a_3(x) = \frac{b_3}{1-x^2}+b_4,  \\
        \\
        h(x) & = -1+\left(1-x^2\right) \left(\frac{2 \,b_2\, x}{b_1 \left(1-x^2\right)^{3/2}}+\frac{ b_5 \, x}{\left(b_1+2 b_0 \left(1-x^2\right)\right){}^{3/2}}\right)
    \end{align*}
    where $b_i\in\mathbb{R}$.
    
    The function $h(x)$ is not an odd function and we are done. 
\end{proof}
This proposition concludes the proof of Theorem~\ref{t1}.

\section{Comparison with Known Results}\label{sec3}

We compare our main result, Theorem \ref{t1}, with the result found by Valent, Duval, and Shevchishin in \cite{ValentDuvalShevchishin2015} and summarised by Valent in \cite{Valent2014}. They claim to have determined all superintegrable systems with a non-trivial cubic polynomial first integral globally defined on $\mathbb{S}^2$. More specifically, comparing the corresponding Casimir, namely our expression \eqref{eq12} and the formula (27) of \cite{Valent2014}, we find a relation between the free parameters $\sigma_1,\sigma_2$ that appear in Theorem \ref{t1} and their free parameters $l,m$. For simplicity, we rescale the parameters $c_0,\dots,c_2$
with $c_0=8$ instead of $c_0=1$ and from \eqref{eq23} we get the conditions
\begin{equation}\label{eq7_t}
    c_1 > -12
    \quad
    \text{and}
    \quad
    -\left(c_1+6\right) 
    < c_2 <
    \frac{(c_1+4)(c_1-12)}{32}
\end{equation}
which are the analog of \eqref{eq7}, i.e. necessary and sufficient conditions on the parameters for a rotationally symmetric Zoll metric whose geodesic flow admits a non-trivial cubic integral.

The integral $P$ in their paper has the form
\begin{equation*}
    P = \left(2H\right)^3+ \left(-\frac{3l+m}{l+m}\right) \left(2H\right)^2L^2+ \left(\frac{3l^2+2lm-1}{\left(l+m\right)^2}\right) \left(2H\right)L^4+ \left(-\frac{l^2-1}{\left(l+m\right)^2}\right) L^6
\end{equation*}
and their admissible parameters are $l\ge-1$ and $m>1$.

Hence, our parameters can be expressed in terms of $l$ and $m$ as:
\begin{equation*}
    c_1 =- \frac{4(3l+m)}{l+m},
    \quad
    c_2 = \frac{2(3l^2+2lm-1)}{\left(l+m\right)^2},
    \quad
    c_3 = -\frac{l^2-1}{\left(l+m\right)^2}. 
\end{equation*}
In terms of $l$ and $m$ the linear condition \eqref{eq19} with $c_0=8$ is automatically satisfied. Assuming $l+m>0$, we find a diffeomorphism between the parameters $l,m$ and $c_1,c_2$, namely: 
\begin{equation*}
    l = - \frac{4+c_1}{\sqrt{(c_1+4)(c_1-12)-32c_2}},
    \quad
    m = \frac{12+c_1}{\sqrt{(c_1+4)(c_1-12)-32c_2}},
\end{equation*}
which is well defined in the region of admissible parameters \eqref{eq7_t}.

We plot their region of admissible parameters in the coordinates $c_1$ and $c_2$ in Fig.~\ref{fig2} and compare to our result, which gives a larger region. 
\begin{figure}[ht]
    \centering
    \includegraphics[width=0.6\linewidth]{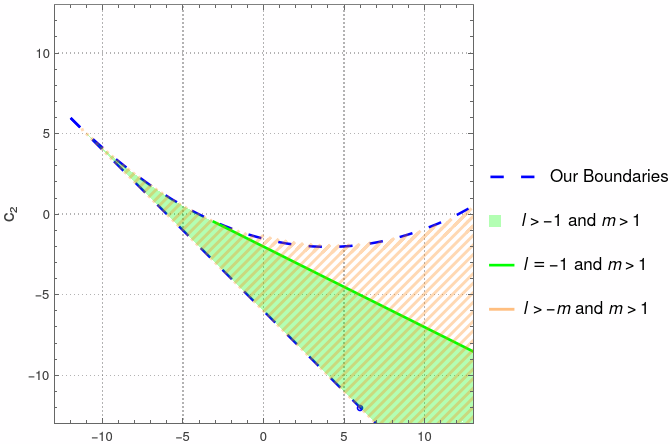}
    \caption{Comparison of regions of admissible parameters}
    \label{fig2}
\end{figure}
Specifically their region is $\{l>-1,m>1\}$ while the whole region of admissible parameters we found is $\{l>-m,m>1\}$. 

\subsection{A Special Case of the N-degree Valent System}

The two-parameter family of rotationally symmetric Zoll metrics with cubic integrals found by Valent in~\cite{Valent2021} covers the whole region of admissible parameter values. Let us compare the parameters. After dividing the Hamiltonian from~~\cite{Valent2021} by $2$ to set a proper energy level, one can check that the linear condition~\eqref{eq19} is satisfied. The integral $P$ then becomes
$$
    P = 
    8H^3 
    - 4(m_1 + m_2 + 1) H^2 L^2 
    + 2(m_1 + m_2 + m_1 m_2) H L^4
    - m_1 m_2 L^6,
$$
where $m_1, m_2 < 1$ are the real parameters of the superintegrable system found in~\cite{Valent2021}. Hence, up to the scaling of $P=F_+ F_-$ by $2^{-3}$ and the permutation of $m_1$ and $m_2$,
$$
    m_1 = -\frac{2\sigma_1 + 1 + \sqrt{\mathcal{D}}}{2}
    \qquad
    \text{and}
    \qquad
    m_2 = -\frac{2\sigma_1 + 1 - \sqrt{\mathcal{D}}}{2},
$$
where
$$
    \mathcal{D} = 4\sigma_1^2 - 4\sigma_1 - 16\sigma_2 - 3
$$
is the discriminant of $Q_2(x^2)$. The conditions on the parameters $m_1$ and $m_2$ are precisely the conditions~\eqref{eq7} on the parameters $\sigma_1$ and $\sigma_2$, that is,
$$
    m_1 < 1
    \qquad
    \text{and}
    \qquad
    m_2 < 1
$$
if and only if
$$
    2\sigma_1 + \sqrt{\mathcal{D}} > -3,
    \qquad
    2\sigma_1 - \sqrt{\mathcal{D}}> -3,
    \qquad
    \text{and}
    \qquad
    \mathcal{D} \geq 0.
$$
Adding and subtracting the first two inequalities, we obtain
$$
    2 \sigma_1 > -3
    \qquad
    \text{and}
    \qquad
    \mathcal{D} > 0.
$$
The final inequality
$$
    4 \sigma_1 + 4 \sigma_2 + 3 > 0
$$
is equivalent to
$$
    (m_1-1)(m_2-1)>0,
$$
which holds for $m_1<1$ and $m_2<1$.
Thus, in \cite{Valent2021} does give the whole family, but there he does not claim made that this is the whole family. Our proof shows that these indeed are all Funk metrics with linear and cubic integral.

\bigskip
\noindent
\textbf{\large Acknowledgements}
\medskip

\noindent
We thank the mathematical research institute MATRIX in Australia and La~Trobe University, where parts of this research were performed.

\bibliography{zoll_funk_superintegrability.bib}
\bibliographystyle{plain}	

\end{document}